\newcommand{\eqdef}{\stackrel{\text{def}}{=}}
\newtheorem*{corollary*}{Corollary}
\newtheorem*{definition*}{Definition}
\DeclareMathOperator{\mon}{mon}
\newcommand{\norm}[1]{\left\lVert#1\right\rVert}
\newcommand{\adeg}[1]{\widetilde{\deg}\left( #1 \right)}
\newcommand{\epsadeg}[1]{\widetilde{\deg}_{\epsilon}\left( #1 \right)}
\DeclareMathOperator{\Per}{per}
\DeclareMathOperator{\rank}{rank}
\DeclareMathOperator{\fs}{fs}
\DeclareMathOperator{\id}{id}
\DeclareMathOperator{\GL}{GL}
\DeclareMathOperator{\DCC}{D^{CC}}
\DeclareMathOperator{\PM}{PM}
\DeclareMathOperator{\MCn}{MC_n}
\DeclareMathOperator{\ANDDT}{AND-DT}
\DeclareMathOperator{\XORDT}{XOR-DT}
\DeclareMathOperator{\D}{D}
\DeclareMathOperator{\DXOR}{\D^{\XOR}}
\DeclareMathOperator{\DAND}{\D^{\AND}}
\DeclareMathOperator{\BPM}{BPM}
\DeclareMathOperator{\BPMn}{BPM_n}
\DeclareMathOperator{\UBPM}{UBPM}
\DeclareMathOperator{\UBPMn}{UBPM_n}
\DeclareMathOperator{\BPMnstar}{BPM_n^\star}
\DeclareMathOperator{\UBPMnstar}{UBPM_n^\star}
\DeclareMathOperator{\BMnk}{BM_{n,k}}
\DeclareMathOperator{\UBMnk}{UBM_{n,k}}
\DeclareMathOperator{\BMnkstar}{BM_{n,k}^\star}
\DeclareMathOperator{\BM}{BM}
\DeclareMathOperator{\MaxMatchnk}{MaxMatch_{n,k}}
\DeclareMathOperator{\XOR}{XOR}
\DeclareMathOperator{\AND}{AND}
\DeclareMathOperator{\ANDn}{AND_n}
\DeclareMathOperator{\ORn}{OR_n}
\newcommand\restr[2]{{\left.\kern-\nulldelimiterspace#1\vphantom{\big|}\right|_{#2}}}
\definecolor{nicegrey}{HTML}{e6e6ea}
{\endMakeFramed}
\definecolor{niceorange}{HTML}{fed766}
\newenvironment{theo}[2][]{%
	\refstepcounter{theo}
	\ifstrempty{#1}%
	{\mdfsetup{%
			frametitle={%
				\tikz[baseline=(current bounding box.east),outer sep=0pt]
				\node[anchor=east,rectangle,fill=blue!20]
				{\strut Theorem~\thetheo};}
		}%
	}{\mdfsetup{%
			frametitle={%
				\tikz[baseline=(current bounding box.east),outer sep=0pt, inner ysep=1pt]
				\node[anchor=east,rectangle, draw=nicegrey, line width=0.7pt, fill=white]
				{\strut Theorem~\thetheo:~#1};}%
		}%
	}%
	\mdfsetup{%
		innertopmargin=1pt,linecolor=nicegrey,%
		linewidth=0.7pt,topline=true,%
		frametitleaboveskip=\dimexpr-\ht\strutbox\relax%
	}
	\begin{mdframed}[]\relax}{%
\end{mdframed}}
\definecolor{niceblue}{HTML}{0e9aa7}
\newcommand{\keepvalues}{%
	\edef\restorevalues{%
		\parindent=\the\parindent
		\parskip=\the\parskip
	}%
}
\newcounter{theo}[section]\setcounter{theo}{0}
\renewcommand{\thetheo}{\arabic{theo}}
\title{Algebraic Representations of Unique Bipartite Perfect Matching}
\titlerunning{Algebraic Representations of Unique Bipartite Perfect Matching}
\author{Gal Beniamini}{The Hebrew University of Jerusalem}{gal.beniamini@mail.huji.ac.il}{}{}
\authorrunning{G. Beniamini}
\keywords{Bipartite Perfect Matching, Boolean Functions, Partially Ordered Sets}
\begin{document}

\maketitle

\begin{abstract}
	We obtain complete characterizations of the Unique Bipartite Perfect Matching function, and of its Boolean dual, using multilinear polynomials over the reals. Building on previous results \cite{beniamini2020approximate,beniamini2020bipartite}, we show that, surprisingly, the dual description is \textit{sparse} and has \textit{low $\ell_1$-norm} -- only exponential in $\Theta(n \log n)$, and this result extends even to other families of matching-related functions. Our approach relies on the M\"obius numbers in the matching-covered lattice, and a key ingredient in our proof is M\"obius' inversion formula.
	
	These polynomial representations yield complexity-theoretic results. For instance, we show that unique bipartite matching is \textit{evasive} for classical decision trees, and \textit{nearly evasive} even for generalized query models. We also obtain a tight $\Theta(n \log n)$ bound on the log-rank of the associated two-party communication task. 
\end{abstract}


\section{Introduction}

A perfect matching in a graph is a subset of edges spanning the graph, no two of which are incident to the same vertex. In this paper we consider the \textit{decision problem} of unique bipartite matching: the input is a balanced bipartite graph over $2n$ vertices, and the goal is to determine whether the graph contains a \textit{unique} perfect matching. This problem can be naturally cast as a Boolean function. 

\begin{definition*}
	The unique bipartite perfect matching function $\UBPMn: \{0,1\}^{n^2} \to \{0,1\}$ is
	\[ \UBPMn(x_{1,1}, \dots, x_{n,n}) = \begin{cases} 1 & \big\{ (i,j) : x_{i,j} = 1\big\} \text{ has a unique perfect matching}  \\ 0 & \text{otherwise} \end{cases} \]
\end{definition*}



The complexity of $\UBPMn$ is closely related to that of $\BPMn$ -- the problem in which we drop the uniqueness condition and simply ask whether a bipartite graph  \textit{contains} a perfect matching. Both $\BPMn$ and $\UBPMn$ are known to lie in $\mathbf{P}$. The former due to a classical result by Edmonds \cite{edmonds1965paths}, and the latter due to Gabow, Kaplan and Tarjan \cite{gabow2001unique}. However, despite their close connection, not all known algorithmic results extend from one problem to another. For instance, $\UBPMn$ was shown by Kozen, Vazirani and Vazirani to be in $\mathbf{NC}$ \cite{kozen1985nc} (see also \cite{hoang2006bipartite}), and no such result is known for $\BPMn$. Lov\'asz showed that $\BPMn$ is in $\mathbf{RNC}$ \cite{lovasz1979determinants}, and the current best-known \textit{deterministic} parallel algorithm is due to Fenner, Gurjar and Thierauf \cite{fenner2019bipartite}, placing the problem in $\mathbf{Quasi}$-$\mathbf{NC}$. Determining the membership of bipartite perfect matching in $\mathbf{NC}$ remains one of the main open problems in parallelizability.

Our main results in this paper are the complete characterizations of both $\UBPMn$ and its dual function, by means of polynomials. These characterizations leverage a deep connection to the polynomial representations of $\BPMn$, obtained in \cite{beniamini2020bipartite, beniamini2020approximate}, and it is our hope that they can be used to further our understanding of the connection between the two. To present our results we require some notation. We say that a bipartite graph is \textit{matching-covered} if every edge of the graph participates in some perfect matching. For a graph $G$ we denote its \textit{cyclomatic number}, a topological quantity, by $\chi(G) = e(G) - v(G) + c(G)$. The set of all perfect matchings of $G$ is denoted $\PM(G)$, and the cardinality of this set is denoted $\Per(G)$ (the permanent of $G$). Under these notations, our first theorem is the following closed-form description of the \textit{unique} real multilinear polynomial representing $\UBPMn$.

\begin{theo}[The Unique Bipartite Perfect Matching Polynomial]{thm:ubpm_polynomial}
	\label{thm:ubpm_polynomial}
	$$ \UBPMn(x_{1,1}, \dots, x_{n,n}) = \sum_{G \subseteq K_{n,n}} c_G \prod_{(i,j) \in E(G)} x_{i,j} $$
	\quad where
	$$
		c_G = \begin{cases} (-1)^{\chi(G)} \Per(G) & G \text{ is matching-covered} \\
		0 & \text{otherwise} \end{cases}
	$$
\end{theo}

The polynomial appearing in Theorem \ref{thm:ubpm_polynomial} bears a striking resemblance to the representation of $\BPMn$, the only difference being the multiplicative $\Per(G)$ appearing in each term of $\UBPMn$. This is a direct result of the connection between the two functions and the \textit{matching-covered lattice}, hereafter $\mathcal{L}_n$, which is formed by all order $2n$ matching-covered graphs, ordered with respect to the subgraph relation. Billera and Sarangarajan \cite{billera1994combinatorics} proved that $\mathcal{L}_n$ is isomorphic to the face lattice of the Birkhoff Polytope $\mathbf{B}_n$. Consequently, this combinatorial lattice is Eulerian, and its M\"obius function is particularly well-behaved -- a fact which we rely on indirectly throughout this paper. In \cite{beniamini2020bipartite}, it was shown that $\BPMn$ is intimately related to the matching-covered lattice: every such graph corresponds to a monomial, and their coefficients are given by M\"obius numbers. Our proof of Theorem~\ref{thm:ubpm_polynomial} extends this connection by leveraging M\"obius Inversion Formula, and in fact allows us to derive the polynomial representation for \textit{any} indicator function over $\mathcal{L}_n$ (including, for instance, $\BPMn$), while also simplifying somewhat parts of the original proof.

Theorem \ref{thm:ubpm_polynomial} yields information-theoretic lower bounds. For example, $\UBPMn$ has full total degree and is thus \textit{evasive}, i.e., any decision tree computing it must have full depth, $n^2$. Unlike its analogue $\BPMn$, which is a \textit{monotone} bipartite graph property and thus known to be evasive \cite{kahn1984topological}, the \textit{unique} perfect matching function is \textit{not monotone}, and for such functions evasiveness is not guaranteed (see e.g. \cite{lovasz2002lecture}). We also obtain lower bounds against generalized families of decision trees, whose internal nodes are labeled by arbitrary parity functions ($\XORDT$), or conjunctions ($\ANDDT$), over subsets of the inputs bits. 

\begin{corollary*}
		For classical, parity, and conjunction trees, the following lower bounds hold:
		$$\D(\UBPMn) = n^2,\ \ \DXOR(\UBPMn) \ge \left(\tfrac12 - o(1)\right) n^2 \ \ and\ \ \DAND(\UBPMn) \ge (\log_3 2) n^2 - o(1)$$
\end{corollary*}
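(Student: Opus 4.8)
The plan is to derive each of the three bounds from the closed form in Theorem~\ref{thm:ubpm_polynomial} by matching the query model with an appropriate complexity measure of the polynomial, where in every case that measure turns out to be controlled by an elementary combinatorial statement about subgraphs of $K_{n,n}$. For \emph{classical} trees I would use $\D(f)\ge\deg(f)$, the standard bound of decision-tree depth by real multilinear degree. In Theorem~\ref{thm:ubpm_polynomial} the coefficient of the full monomial $\prod_{(i,j)}x_{i,j}$ is $c_{K_{n,n}}=(-1)^{\chi(K_{n,n})}\Per(K_{n,n})=\pm\,n!\neq 0$, because $K_{n,n}$ is matching-covered; hence $\deg(\UBPMn)=n^2$, and together with the trivial $\D(\UBPMn)\le n^2$ this gives $\D(\UBPMn)=n^2$, i.e.\ evasiveness.

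For \emph{parity} trees I would use $\DXOR(f)\ge\deg_{\mathbb{F}_2}(f)$, which holds because the leaves of a depth-$d$ parity tree partition the cube into affine subspaces of codimension $\le d$, so $f$ is an $\mathbb{F}_2$-linear combination of indicators of such subspaces, each of $\mathbb{F}_2$-degree $\le d$. To lower bound $\deg_{\mathbb{F}_2}(\UBPMn)$: the coefficients $c_G$ are integers, so reducing the polynomial of Theorem~\ref{thm:ubpm_polynomial} modulo $2$ produces the (unique, multilinear) $\mathbb{F}_2$-representation of $\UBPMn$; it is supported on the matching-covered $G$ with $\Per(G)$ odd, and $\Per(G)$ is odd exactly when the bipartite adjacency matrix of $G$ is non-singular over $\mathbb{F}_2$ (permanent and determinant agree mod $2$). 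Hence $\deg_{\mathbb{F}_2}(\UBPMn)$ equals the largest edge count of a matching-covered $G\subseteq K_{n,n}$ whose adjacency matrix is $\mathbb{F}_2$-invertible, and I would exhibit such a $G$ explicitly: $K_{n,n}$ with a near-perfect matching deleted, whose adjacency matrix is the all-ones matrix with zeros in $n-1$ of its diagonal entries. A short calculation shows this matrix is invertible over $\mathbb{F}_2$, and the graph is plainly matching-covered for all large $n$, giving $\deg_{\mathbb{F}_2}(\UBPMn)\ge n^2-n+1$ and in particular $\DXOR(\UBPMn)\ge(\tfrac12-o(1))n^2$. (Alternatively, one can argue non-constructively from $G\sim G(n,n,\tfrac12)$, which is matching-covered with probability $1-o(1)$, has an $\mathbb{F}_2$-invertible adjacency matrix with probability bounded away from $0$, and has $(\tfrac12-o(1))n^2$ edges with high probability.)

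For \emph{conjunction} trees the key lemma I would establish is that a depth-$d$ conjunction tree computes a function whose unique multilinear polynomial has at most $3^d$ monomials. This follows by induction on the tree: writing $f=\big(\prod_{i\in S}x_i\big)f_1+\big(1-\prod_{i\in S}x_i\big)f_0$ for the root query $\bigwedge_{i\in S}x_i$, multiplying $f_1$ by the single monomial $\prod_{i\in S}x_i$ does not increase its monomial count, while multiplying $f_0$ by $(1-\prod_{i\in S}x_i)$ at most doubles it, so the count is at most $3^{d-1}+2\cdot 3^{d-1}=3^d$. Therefore $\DAND(\UBPMn)\ge\log_3\mon(\UBPMn)$, and by Theorem~\ref{thm:ubpm_polynomial} $\mon(\UBPMn)$ is exactly the number of matching-covered subgraphs of $K_{n,n}$. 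The remaining ingredient is that this number is $(1-o(1))2^{n^2}$: a uniformly random subgraph of $K_{n,n}$ fails to be matching-covered only if it is disconnected or some edge $(i,j)$ lies in no perfect matching (i.e.\ $G\setminus\{i,j\}$ has no perfect matching), and each such event has probability $2^{-\Omega(n)}$, so a union bound over the $n^2$ edges and over connectivity kills the total. Consequently $\DAND(\UBPMn)\ge\log_3\!\big((1-o(1))2^{n^2}\big)=(\log_3 2)\,n^2-o(1)$, and with the trivial $\DAND(\UBPMn)\le n^2$ this is the claimed near-evasiveness.

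I expect the last step to be the main obstacle: establishing that matching-covered subgraphs form a $1-o(1)$ fraction of all subgraphs of $K_{n,n}$, so that $\mon(\UBPMn)=2^{n^2}(1-o(1))$ (equivalently, that the non-matching-covered subgraphs are exponentially rare). This needs a standard but genuine probabilistic/extremal estimate — pinning down the dominant obstruction to matching-coveredness in a dense random bipartite graph (an under-covered vertex or a non-extendable edge) and union-bounding it, using that $G(n-1,n-1,\tfrac12)$ has a perfect matching except with exponentially small probability. Everything else is immediate once Theorem~\ref{thm:ubpm_polynomial} and the three elementary inequalities ($\D\ge\deg$, $\DXOR\ge\deg_{\mathbb{F}_2}$, $\DAND\ge\log_3\mon$) are in place.
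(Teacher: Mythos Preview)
Your proposal is correct. For the classical and conjunction-tree bounds it is the paper's argument verbatim: full real degree via $c_{K_{n,n}}=\pm n!\neq 0$, and $\DAND\ge\log_3|\mon|$ combined with the fact that a $1-o(1)$ fraction of bipartite graphs are matching-covered (the paper simply cites both ingredients from \cite{beniamini2020bipartite}, whereas you supply sketches). For the parity-tree bound your primary route differs from, and actually sharpens, the paper's. The paper argues non-constructively: a uniformly random $n\times n$ matrix over $\mathbb{F}_2$ is invertible with probability $\approx 0.289$, almost all bipartite graphs are matching-covered, and a Chernoff bound on the edge count then yields a witness with $(\tfrac12-o(1))n^2$ edges --- exactly your parenthetical alternative. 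Your explicit graph $K_{n,n}$ minus a near-perfect matching (biadjacency matrix $J$ with $n-1$ diagonal zeros) is indeed $\mathbb{F}_2$-invertible (row-reduce against the all-ones row) and matching-covered for all $n\ge 3$, giving $\deg_{\mathbb{F}_2}(\UBPMn)\ge n^2-n+1=(1-o(1))n^2$; this is strictly stronger than the stated $(\tfrac12-o(1))n^2$ and in fact pins down the $\mathbb{F}_2$-degree exactly, since an $n\times n$ matrix over $\mathbb{F}_2$ with fewer than $n-1$ zeros has rank at most $1$ plus the number of zeros and hence cannot be invertible.
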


In the second part of this paper we consider the Boolean dual function $\UBPMnstar$, which is obtained by flipping all the input and output bits (or formally, $\UBPMnstar(x_{1,1}, \dots, x_{n,n}) = 1-\UBPMn(1-x_{1,1}, \dots, 1-x_{n,n})$). By construction, this is the indicator over all bipartite graphs whose \textit{complement} does \textit{not} contain a unique perfect matching. Our second result is a complete characterization of $\UBPMnstar$ as a real multilinear polynomial. This description relies \textit{heavily} on the that of  $\BPMnstar$ -- which is the dual of the bipartite perfect matching function $\BPMn$. The polynomial representation of the latter dual was obtained in a series of papers \cite{beniamini2020bipartite, beniamini2020approximate}, and is omitted here for brevity. 

\begin{theo}[The Dual Polynomial of Unique Bipartite Perfect Matching]{thm:ubpm_dual_polynomial}
	\label{thm:ubpm_dual_polynomial}
	$$ \UBPMnstar(x_{1,1}, \dots, x_{n,n}) = \sum_{G \subseteq K_{n,n}} c^\star_G \prod_{(i,j) \in E(G)} x_{i,j} $$
	\quad where
	$$
	c^\star_G = \Per(G) \cdot a_G^\star + \sum_{M \notin \PM(G)} (-1)^{|E(M) \setminus E(G)|} \cdot a_{G \cup M}^\star
	$$
	\quad and $a_{G}^\star$ denotes the coefficient of $G$ in $\BPMnstar$.
\end{theo}

Theorem \ref{thm:ubpm_dual_polynomial} expresses the coefficient of every graph $G$ as an alternating sum over coefficients of $\BPMnstar$, corresponding exactly to those graphs formed by adjoining a single perfect matching to $G$. This suffices in order to \textit{inherit} the main structural result of \cite{beniamini2020approximate} regarding $\BPMnstar$: the $\ell_1$-norm of $\UBPMnstar$, i.e., the norm of the coefficient vector of the representing polynomial, is \textit{very small} -- only exponential in $\Theta(n \log n)$, and this is tight.

\begin{corollary*}
	The dual polynomial is \textit{sparse} and its coefficients have \textit{small}. Explicitly,
	$$ \log \norm{\UBPMnstar}_1 = \Theta(n \log n)$$
\end{corollary*}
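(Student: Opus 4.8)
The plan is to establish the two inequalities $\log\norm{\UBPMnstar}_1 = O(n\log n)$ and $\log\norm{\UBPMnstar}_1 = \Omega(n\log n)$ separately, both as short consequences of the closed form in Theorem~\ref{thm:ubpm_dual_polynomial}. The upper bound will be reduced, by the triangle inequality, to the known estimate $\log\norm{\BPMnstar}_1 = O(n\log n)$ of \cite{beniamini2020approximate}; the lower bound will come from a single explicit coefficient of magnitude $n!$.

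For the upper bound I would expand $\norm{\UBPMnstar}_1 = \sum_{G\subseteq K_{n,n}}\lvert c_G^\star\rvert$ and bound each term by applying the triangle inequality to the formula of Theorem~\ref{thm:ubpm_dual_polynomial}, obtaining two sums. The first, $\sum_G \Per(G)\,\lvert a_G^\star\rvert$, is at most $n!\sum_G\lvert a_G^\star\rvert = n!\cdot\norm{\BPMnstar}_1$, because the permanent of any subgraph of $K_{n,n}$ is at most $\Per(K_{n,n}) = n!$. The second, $\sum_G\sum_{M\notin\PM(G)}\lvert a_{G\cup M}^\star\rvert$, I would reorganize by pulling the sum over the (at most $n!$) perfect matchings $M$ of $K_{n,n}$ to the outside: for a fixed $M$, a graph $H\supseteq M$ equals $G\cup M$ for fewer than $2^n$ choices of $G$ with $M\notin\PM(G)$ --- namely $E(G)$ must consist of $E(H)\setminus E(M)$ together with a proper subset of $E(M)$ --- so this second sum is at most $n!\cdot 2^n\cdot\norm{\BPMnstar}_1$. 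Hence $\norm{\UBPMnstar}_1\le 2^{n+1}\,n!\,\norm{\BPMnstar}_1$, and taking logarithms and substituting $\log\norm{\BPMnstar}_1 = O(n\log n)$ gives $\log\norm{\UBPMnstar}_1 = O(n\log n)$.

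For the lower bound it is enough to isolate a single monomial. I would use the symmetry $\UBPMnstar(x) = 1 - \UBPMn(\vec 1 - x)$: negating all $n^2$ variables and the output fixes the top-degree coefficient up to sign, so $\lvert c_{K_{n,n}}^\star\rvert$ equals the magnitude of the top-degree coefficient of $\UBPMn$, which by Theorem~\ref{thm:ubpm_polynomial} is $\lvert(-1)^{\chi(K_{n,n})}\Per(K_{n,n})\rvert = \Per(K_{n,n}) = n!$. (Equivalently, taking $G = K_{n,n}$ in Theorem~\ref{thm:ubpm_dual_polynomial} the sum over $M\notin\PM(K_{n,n})$ is empty, so $c_{K_{n,n}}^\star = \Per(K_{n,n})\cdot a_{K_{n,n}}^\star = n!\cdot a_{K_{n,n}}^\star$, and $a_{K_{n,n}}^\star$ is, up to sign, the top-degree coefficient $(-1)^{\chi(K_{n,n})}$ of $\BPMn$, hence $\pm 1$.) Therefore $\norm{\UBPMnstar}_1 \ge \lvert c_{K_{n,n}}^\star\rvert = n! = 2^{\Omega(n\log n)}$, and combined with the upper bound this proves $\log\norm{\UBPMnstar}_1 = \Theta(n\log n)$.

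The only genuinely deep ingredient here is the external bound $\log\norm{\BPMnstar}_1 = O(n\log n)$ from \cite{beniamini2020approximate}; granting it, everything else is bookkeeping on the formula of Theorem~\ref{thm:ubpm_dual_polynomial}. I expect the step requiring the most care to be the reorganization of the double sum in the upper bound --- in particular the $2^n$ bound on the number of $G$ producing a given $H = G\cup M$ --- since the crude alternative of bounding $\sum_{M\notin\PM(G)}\lvert a_{G\cup M}^\star\rvert$ by $n!\cdot\norm{\BPMnstar}_1$ separately for each of the $2^{n^2}$ graphs $G$ would inflate the estimate all the way to $2^{\Theta(n^2)}$.
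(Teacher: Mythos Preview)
Your argument is correct. Both the upper and lower bounds go through as written; in particular the change of summation in the second sum is sound, since for fixed $M$ and fixed $H\supseteq M$ the graphs $G$ with $G\cup M=H$ and $M\not\subseteq G$ are precisely $(H\setminus M)\cup S$ for $S\subsetneq E(M)$, of which there are $2^n-1$.

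The paper's route is close to yours but packaged differently: rather than bounding $\sum_G|c_G^\star|$ directly, it bounds separately the \emph{maximum coefficient} (using that each $|a_H^\star|\le 2^{2n}$, so $|c_G^\star|\le n!\cdot 2^{2n}$) and the \emph{sparsity} (observing that $c_G^\star\neq 0$ forces $G$ to lie in a ``ball'' of size at most $2^n\cdot n!$ around some $H\in\mon(\BPMnstar)$, and then invoking $|\mon(\BPMnstar)|=2^{O(n\log n)}$), and multiplies the two. Your approach is more streamlined and needs only the single black-box input $\log\norm{\BPMnstar}_1=O(n\log n)$; the paper's approach uses two finer inputs from \cite{beniamini2020approximate} but in return establishes the sparsity bound $|\mon(\UBPMnstar)|=2^{O(n\log n)}$ as a byproduct, which is part of the informal ``sparse'' claim in the corollary's statement. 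The lower bound via $|c_{K_{n,n}}^\star|=n!$ is identical in both.
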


The low norm of the dual yields algorithmic results for the unique-bipartite-matching problem, and for related matching problems. For instance, through the approximation scheme of \cite{beniamini2020approximate, sherstov2020algorithmic}, it allows one to obtain a low-degree polynomial \textit{approximation} of the unique bipartite matching function over the hypercube (i.e., ``approximate degree''), which holds even for \textit{exponentially small} error. The same $\ell_1$-norm bound also directly extends to the spectral norm of $\UBPMn$,\footnote{The spectra of any function and its dual are identical up to sign, and the $\{0,1\}$-polynomial $\ell_1$-norm is always trivially at least as large as the $\{\pm1\}$-representation (``Fourier'') $\ell_1$-norm.} which is a well-studied quantity in analysis of Boolean functions.



Finally, we consider the two-party deterministic communication complexity of unique bipartite matching. The input is a graph $G \subseteq K_{n,n}$, whose edges are distributed among two parties according to \textit{any} \textit{arbitrary} and \textit{fixed} partition. The sparse polynomial representation of $\UBPMnstar$ allows us to deduce that the log-rank of the communication matrix, for \textit{any} of the above communication tasks, is bounded by only $\mathcal{O}(n \log n)$, and we prove that this is tight.\footnote{In fact, our results hold even for a certain \textit{$\land$-lifted} and dualised version of this problem.} We remark that, while we show that unique matching has low log-rank, not much is known regarding its \textit{deterministic communication complexity}. For the monotone variant $\BPMn$, known algorithms (e.g. \cite{hopcroft1973n}) can be translated into protocols using only $\widetilde{\mathcal{O}}(n^{\sfrac32})$ bits \cite{nisan2021demand}. However, it is currently not known how to convert algorithms for $\UBPMn$ (such as \cite{gabow2001unique, golumbic2001uniquely}), into protocols using even $\mathcal{O}(n^{2-\varepsilon})$ bits, for any $\varepsilon > 0$. Determining the deterministic communication complexity of $\UBPMn$ is thus left as an open problem.

\section{Preliminaries and Notation}

\subsection{Boolean Functions and Polynomials}

Every Boolean function $f: \{0,1\}^n \to \{0,1\}$ can be \textit{uniquely} represented by a multilinear polynomial $p \in \mathbb{R}\left[x_1, \dots, x_n \right]$ (see e.g. \cite{o2014analysis}), where $f$ and $p$ agree on all Boolean inputs $\{0,1\}^n$. The family of subsets corresponding to monomials in this polynomial representation (i.e., whose coefficient does not vanish) is denoted by $\mon(f)$. The cardinality of $\mon(f)$ is known as the \textit{sparsity} of $f$, and the maximal cardinality of any $S \in \mon(f)$ is known as the \textit{total degree} of $f$, hereafter $\deg(f)$. The $\ell_1$-norm of $f$ is the norm of its representing polynomial's coefficient vector, namely:
$$ \norm{f}_1 \eqdef \norm{ (a_S)_{S \subseteq [n]} }_1 \text{, where $f$ is $\{0,1\}$-represented by } p(x_1, \dots, x_n) = \sum_{S \subseteq [n]} a_S \prod_{i \in S} x_i $$

Given a Boolean function $f: \{0,1\}^n \to \{0,1\}$, it is often useful to consider the transformation in which we \textit{invert} all the input and output bits. This process produces a new Boolean function $f^\star$, known as the Boolean dual. 

\begin{definition}
	\label{defn:boolean_dual}
	Let $f: \{0,1\}^n \rightarrow \{0,1\}$ be a Boolean function. The \textbf{Boolean dual} of $f$ is the function $f^\star: \{0,1\}^n \to \{0,1\}$ where the symbols $0$ are $1$ are interchanged. Formally,
	$$ \forall x \in \{0,1\}^n:\ f^\star(x_1, \dots, x_n) = 1 - f(1-x_1, \dots, 1-x_n) $$
\end{definition}

The polynomial representations of a Boolean function $f$ and its dual $f^\star$ can differ substantially (for example $\ANDn^\star = \ORn$, and while the former is represented by a single monomial, the latter consists of $2^n-1$ monomials). However, since $f$ and $f^\star$ are obtained by affine transformations of one another, they share many properties. For example, their Fourier spectra are identical \cite{o2014analysis}, up to sign. Moreover, they have the same approximate degree \cite{beniamini2020approximate} for any error $\varepsilon$, and the ranks of their associated communication matrices (see proceeding subsections) are identical up to an additive constant (of $1$).

\subsection{Graphs}

We use the standard notation for quantities relating to graphs. 
A less common measure appearing in this paper is the cyclomatic number $\chi(G)$, a topological quantity.
\begin{definition}
	\label{def:cyclomatic_number}
	Let $G$ be a graph. The \textbf{cyclomatic number} of $G$ is defined by: 
	$$ \chi(G) = e(G) - v(G) + c(G) $$
\end{definition}

A \textit{matching} in a graph $G \subseteq K_{n,n}$ is a collection of edges sharing no vertices, and said matchings are called \textit{perfect} if they contain exactly $n$ edges (i.e., every vertex in the graph is incident to precisely one edge in the matching). The set of \textit{all} perfect matchings denoted by $\PM(G)$. For any graph $G \subseteq K_{n,n}$, we define the \textit{permanent} $\Per(G)$ and the \textit{determinant} $\det(G)$ as the application of these two functions to the biadjacency matrix of $G$, noting that $\Per(G)$ counts the number of perfect matchings in $G$.

Perfect matchings and the graphs formed by unions thereof play a central role in this paper. A graph $G \subseteq K_{n,n}$ is called \textbf{matching-covered} if and only if every edge of $G$ participates in some perfect matching.
%
%
Matching-covered graphs have interesting combinatorial properties. For example, this is precisely the family of all graphs admitting a bipartite ear decomposition (similar to the ear decomposition of 2-edge-connected graphs). This family had previously appeared extensively in the literature, and in particular had been studied at length by Lov{\'a}sz and Plummer \cite{plummer1986matching}, and by Hetyei \cite{hetyei1964rectangular}. Hereafter, we denote the set of all such graphs by 
$$\MCn = \Big\{ G \subseteq K_{n,n} : G \text{ is matching-covered} \Big\} $$

All graphs in this paper are \textit{balanced bipartite graphs}, over the fixed vertex set of the complete bipartite graph $K_{n,n}$. Consequently, we use the notation $G \subseteq H$ to indicate inclusion over the edges, and similarly $G \cup H$ is the graph whose edges are $E(G) \cup E(H)$. Lastly, many of the Boolean functions appearing in this paper are defined over subgraphs of $K_{n,n}$, where every input bit is associated with a single edge. For such functions, the notation $f(G)$, where $G \subseteq K_{n,n}$, corresponds to this mapping.

\subsection{Communication Complexity}

In this paper we consider the \textit{two-party deterministic communication model}. For a comprehensive textbook on the topic, we refer the reader to \cite{communication_complexity_book}. 
The \textbf{deterministic communication complexity} of $f$, hereafter $\DCC(f)$, is the \textit{least number of bits communicated by a protocol computing $f$, on the worst-case input}. Any (\textit{unpartitioned}) Boolean function $f: \{0,1\}^n \to \{0,1\}$ naturally gives rise to a \textit{family} of associated two-party communication tasks: one corresponding to each possible partition of the input bits between the two parties. The \textbf{deterministic communication complexity of a Boolean function} is then defined as follows.

\begin{definition}
	The deterministic communication complexity of $f: \{0,1\}^n \to \{0,1\}$ is defined
	$$ \DCC(f) \eqdef \max_{S \sqcup \bar{S} = [n]} \DCC \big(f_S(x,y)\big) $$
	where $\DCC\big(f_S(x,y)\big)$ is the deterministic communication complexity of the two-party Boolean function $f_S(x,y): \{0,1\}^{|S|} \times \{0,1\}^{|\bar{S}|} \to \{0,1\}$, representing $f$ under the partition $S \sqcup \bar{S}$.
\end{definition}

For any two-party Boolean function, let us also define the following two useful objects.

\begin{definition}
	Let $f: \{0,1\}^m \times \{0,1\}^n \to \{0,1\}$. The \textbf{communication matrix} of $f$ is	$$
	M_f \in \mathbb{R}^{\{0,1\}^m \times \{0,1\}^n},\text{  where }\forall (x,y) \in \{0,1\}^m \times \{0,1\}^n:\ M_f(x,y) = f(x,y)
	$$
\end{definition}

\begin{definition}
	\label{def:fooling_set}
	Let $f: \{0,1\}^m \times \{0,1\}^n \to \{0,1\}$ be a two-party function. We say that $S \subseteq \{0,1\}^m \times \{0,1\}^n$ is a \textbf{fooling set} for $f$ if and only if:
	$$S \subseteq f^{-1}(1), \text{ and } \forall (x_1,y_1)\ne(x_2,y_2) \in S:\ \{(x_1,y_2), (x_2,y_1)\} \cap f^{-1}(0) \ne \emptyset $$
\end{definition}

The log of the rank of $M_f$ \textit{over the reals} (sometimes referred to as the ``log-rank of $f$'') is intimately related to the communication complexity of $f$. A classical theorem due to Mehlhorn and Schmidt \cite{mehlhorn1982vegas} states that $\DCC(f) \ge \log_2 \rank M_f$, and these two quantities are famously conjectured to be polynomially related \cite{lovasz1988lattices}. As for the fooling set, it is well known that $\DCC(f) \ge \log_2 \fs(f)$ for any two-party function $f$ \cite{communication_complexity_book}, where $\fs(f)$ is the maximum size of a fooling set. This bound was extended by Dietzfelbinger, Hromkovi{\v{c}} and Schnitger \cite{dietzfelbinger1996comparison}, who showed that in fact $\log \fs (f) \le 2 \log \rank f + 2$.

\subsection{Posets, Lattices and M\"obius Functions}

Partially ordered sets (hereafter, \textbf{posets}) are defined by a tuple $\mathcal{P} = (P,\le)$, where $P$ is the element set, and $\le$ is the order relation (which is reflexive, antisymmetric and transitive). For any two elements $x,y \in P$, the notation $[x,y] \eqdef \{ z \in P : x \le z \le y \}$ denotes the \textit{interval} from $x$ to $y$. 
A \textit{combinatorial lattice} is a poset satisfying two additional conditions: every two elements have a least upper bound (a ``join''), and a greatest lower bound (a ``meet''). The \textit{face lattice of a polytope} is a combinatorial lattice whose elements correspond to the faces of a polytope, ordered by the subset relation. Such a lattice is \textit{bounded} -- it has a unique bottom element (the empty face $\hat{0}$), and a unique top element (the polytope itself), and it is also \textit{graded}, meaning that the length of all maximal chains between any two elements $x,y$ are identical (in other words, the elements can be \textit{ranked}).

Partially ordered sets  come equipped with an important function known as the \textbf{M\"obius function}. The M\"obius function of a poset is the inverse, with respect to convolution, of its zeta function $\zeta(x,y) = \mathbbm{1}\{x < y\}$.  For information on incidence algebra and the M\"obius function, we refer the reader to \cite{Stanley:2011:ECV:2124415}.

\begin{definition}[M\"obius Function for Posets]
	\label{mobius_func_poset}
	Let $\mathcal{P} = (P, \le)$ be a finite poset. The M\"obius function $\mu_\mathcal{P} : P \times P \rightarrow \mathbb{R}$ of $\mathcal{P}$ is defined
	$$ \forall x \in P:\ \mu_\mathcal{P}(x, x) = 1, \quad \forall x,y \in P,\ y < x:\ \mu_\mathcal{P}(y,x) = - \sum_{y \le z < x} \mu_\mathcal{P}(y,z)$$
\end{definition}

The M\"obius Inversion Formula allows one to relate two functions defined on a poset $\mathcal{P}$, where one function is a downwards closed sum of another, by means of the M\"obius function. This can be seen as a generalization of its number-theoretic analogue (as indeed the M\"obius function of number theory arises in this manner from the \textit{divisibility poset}).  

\begin{theorem}[M\"obius Inversion Formula, see \cite{Stanley:2011:ECV:2124415}]
	\label{thm:mobius_inversion_formula}
	Let $\mathcal{P} = (P, \le)$ be a finite poset and let $f,h: P \rightarrow \mathbb{F}$ be two functions, where $\mathbb{F}$ is a field. Then:
	$$\forall x \in P:\ h(x) = \sum_{y \le x} f(y) \quad\iff\quad \forall x \in P:\ f(x) = \sum_{y \le x} h(y)\mu_\mathcal{P}(y,x)$$	
\end{theorem}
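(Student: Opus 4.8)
The plan is to derive this classical identity inside the \emph{incidence algebra} of $\mathcal{P}$ --- the setting in which the recursive definition of $\mu_\mathcal{P}$ given above becomes most transparent. First I would set up the convolution product on functions $\alpha$ of the intervals of $\mathcal{P}$, namely $(\alpha*\beta)(y,x)=\sum_{y\le z\le x}\alpha(y,z)\beta(z,x)$, with unit $\delta(y,x)=\mathbbm{1}\{y=x\}$ and zeta function $\zeta(y,x)=\mathbbm{1}\{y\le x\}$. The key preliminary observation is that the defining recursion for $\mu_\mathcal{P}$, once rearranged, says precisely
$$\sum_{y\le z\le x}\mu_\mathcal{P}(y,z)=\mathbbm{1}\{y=x\}\qquad\text{for all }y\le x,$$
that is, $\mu_\mathcal{P}*\zeta=\delta$.

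Then I would prove both directions of the equivalence by a single double-sum interchange. For ``$\Leftarrow$'', assuming $f(x)=\sum_{y\le x}h(y)\mu_\mathcal{P}(y,x)$, I substitute and swap the order of summation:
$$\sum_{y\le x}f(y)=\sum_{y\le x}\sum_{z\le y}h(z)\mu_\mathcal{P}(z,y)=\sum_{z\le x}h(z)\sum_{z\le y\le x}\mu_\mathcal{P}(z,y)=\sum_{z\le x}h(z)\,(\mu_\mathcal{P}*\zeta)(z,x)=h(x),$$
using $\mu_\mathcal{P}*\zeta=\delta$. The ``$\Rightarrow$'' direction is the mirror image: assuming $h(x)=\sum_{y\le x}f(y)$, the same manipulation yields $\sum_{y\le x}h(y)\mu_\mathcal{P}(y,x)=\sum_{z\le x}f(z)\,(\zeta*\mu_\mathcal{P})(z,x)$, which collapses to $f(x)$ provided $\zeta*\mu_\mathcal{P}=\delta$ as well.

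The only point that is not purely mechanical --- and the one I would be careful about --- is exactly this two-sidedness: a priori the recursion only certifies that $\mu_\mathcal{P}$ is a \emph{left} inverse of $\zeta$. I would close the gap in one of two standard ways. (i) Fix a linear extension of $\mathcal{P}$; then $\zeta$ is upper-unitriangular, hence invertible as a matrix with a genuinely two-sided inverse, and one checks that this inverse satisfies the same recursion defining $\mu_\mathcal{P}$, forcing the two to coincide --- so $\zeta*\mu_\mathcal{P}=\delta$ too. (ii) Alternatively, introduce the mirror function $\mu'$ defined by $\mu'(x,x)=1$ and $\mu'(y,x)=-\sum_{y<z\le x}\mu'(z,x)$; the symmetric rearrangement gives $\zeta*\mu'=\delta$, and then associativity of convolution yields $\mu_\mathcal{P}=\mu_\mathcal{P}*(\zeta*\mu')=(\mu_\mathcal{P}*\zeta)*\mu'=\mu'$, whence $\zeta*\mu_\mathcal{P}=\delta$. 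Since the whole argument uses only that $\mathbb{F}$ is a commutative ring with $1$, nothing further is needed.
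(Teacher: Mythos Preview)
Your argument is correct and is the standard incidence-algebra proof; the handling of the two-sidedness of the inverse via either the unitriangular-matrix observation or the associativity trick $\mu_\mathcal{P}=\mu_\mathcal{P}*(\zeta*\mu')=(\mu_\mathcal{P}*\zeta)*\mu'=\mu'$ is exactly what is needed to justify the ``$\Rightarrow$'' direction.

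There is, however, nothing to compare against: the paper does not prove this theorem. It is stated as a classical result with a reference to Stanley's \emph{Enumerative Combinatorics}, and is then invoked as a black box in the proof of Theorem~\ref{thm:ubpm_poly}. Your write-up is precisely the proof one finds in that reference, so in effect you have filled in the citation rather than offered an alternative route.
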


\section{The Unique Perfect Matching Polynomial}

Our main object of study is the unique bipartite matching function.
\begin{definition}
	The Unique Bipartite Perfect Matching function is defined
	$$ \UBPMn(x_{1,1}, \dots, x_{n,n}) = \begin{cases} 1 & \big\{ (i,j) : x_{i,j} = 1\big\} \subseteq K_{n,n} \text{ has a unique P.M.}  \\ 0 & \text{otherwise} \end{cases} $$
\end{definition}
The unique multilinear representation of $\UBPMn$ is characterized in the following Theorem.
\begin{theorem}
	\label{thm:ubpm_poly}
	The unique polynomial $\UBPMn: \{0,1\}^{n^2} \rightarrow \{0,1\}$ is given by
	$$ \UBPMn\left(x_{1,1}, \dots, x_{n,n}\right) = \sum_{G \in \MCn} (-1)^{\chi(G)} \ \Per(G) \prod_{(i,j) \in E(G)} x_{i,j}$$
\end{theorem}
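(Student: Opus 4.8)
The plan is to verify that the right-hand side $p(x) := \sum_{G \in \MCn} (-1)^{\chi(G)}\,\Per(G)\prod_{(i,j)\in E(G)} x_{i,j}$ agrees with $\UBPMn$ at every point of $\{0,1\}^{n^2}$; since the multilinear representation of a Boolean function is unique, this identifies $p$ with the polynomial of $\UBPMn$. Evaluating $p$ at the indicator vector of a graph $H \subseteq K_{n,n}$ annihilates every monomial except those indexed by subgraphs of $H$, so $p(H) = \sum_{G \in \MCn,\; G \subseteq H} (-1)^{\chi(G)}\,\Per(G)$, and the task is to show this equals $\UBPMn(H) = \mathbbm{1}\{\Per(H) = 1\}$.

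The first step is to reduce to matching-covered inputs. Let $\mathrm{core}(H) \subseteq H$ be the subgraph consisting of all edges that lie in some perfect matching of $H$. Then $\mathrm{core}(H)$ is matching-covered, spans the full vertex set, and $\PM(\mathrm{core}(H)) = \PM(H)$, since a perfect matching of $H$ uses only edges of $\mathrm{core}(H)$ and conversely. Moreover any matching-covered $G \subseteq H$ is in fact contained in $\mathrm{core}(H)$, because each edge of $G$ lies in one of $G$'s own perfect matchings, which is a perfect matching of $H$. Hence $p(H) = p(\mathrm{core}(H))$, and since likewise $\UBPMn(H) = \mathbbm{1}\{\Per(\mathrm{core}(H)) = 1\} = \UBPMn(\mathrm{core}(H))$, it suffices to prove $p(K) = \UBPMn(K)$ for every $K \in \mathcal{L}_n$.

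This is where M\"obius inversion enters, carried out inside $\mathcal{L}_n$. Writing $f(G) := (-1)^{\chi(G)}\,\Per(G)$ (so the empty graph contributes $0$), we have $p(K) = \sum_{G \le K} f(G)$, the sum taken in the poset $\mathcal{L}_n$. On the other hand $\UBPMn$ restricted to $\mathcal{L}_n$ is the indicator $\delta$ of perfect matchings: a matching-covered graph has a unique perfect matching exactly when it \emph{is} a perfect matching. By the M\"obius Inversion Formula (Theorem~\ref{thm:mobius_inversion_formula}) applied to the finite poset $\mathcal{L}_n$, the identity ``$\sum_{G \le K} f(G) = \delta(K)$ for all $K$'' is equivalent to ``$f(K) = \sum_{G \le K} \delta(G)\,\mu_{\mathcal{L}_n}(G,K)$ for all $K$'', and the latter sum collapses to $\sum_{M \in \PM(K)} \mu_{\mathcal{L}_n}(M,K)$. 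Thus the entire theorem is reduced to the single identity
$$ (-1)^{\chi(K)}\,\Per(K) \;=\; \sum_{M \in \PM(K)} \mu_{\mathcal{L}_n}(M,K), \qquad K \in \mathcal{L}_n. $$
To settle it I would invoke that $\mathcal{L}_n$ is isomorphic to the face lattice of the Birkhoff polytope $\mathbf{B}_n$ (Billera--Sarangarajan) and is therefore Eulerian, so $\mu_{\mathcal{L}_n}(M,K) = (-1)^{\rho(K) - \rho(M)}$ for the rank function $\rho$; since the face of $\mathbf{B}_n$ supported on a matching-covered graph $G$ has dimension $e(G) - v(G) + c(G) = \chi(G)$, we get $\rho(K) - \rho(M) = \chi(K) - \chi(M)$, and $\chi(M) = 0$ for a perfect matching, so every summand equals $(-1)^{\chi(K)}$ and there are exactly $\Per(K)$ of them.

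The main obstacle is the bridge through the rank function: one needs the identification $\rho(G) = \chi(G) + 1$ on $\mathcal{L}_n \setminus \{\widehat{0}\}$ — equivalently, that the dimension of the face of $\mathbf{B}_n$ supported on $G$ is the cyclomatic number $\chi(G)$, and that the atoms of $\mathcal{L}_n$ are precisely the perfect matchings — so that the topological sign $(-1)^{\chi}$ in $p$ lines up with the Eulerian sign $(-1)^{\rho}$. The remaining points (the properties of $\mathrm{core}(H)$, the equality of perfect-matching sets, and tracking which monomials survive the evaluation) are routine. One can also bypass the Inversion Formula and instead swap the order of summation in $\sum_{G \le K} f(G)$ directly, reducing to the vanishing of $\sum_{G \in [M,K]} (-1)^{\chi(G)}$ over the intervals of the Eulerian lattice $\mathcal{L}_n$.
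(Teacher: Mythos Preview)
Your proposal is correct and follows essentially the same route as the paper: reduce to matching-covered inputs via the ``core'' construction, then use M\"obius inversion on the lattice $\mathcal{L}_n$, invoking Billera--Sarangarajan to get the Eulerian sign $\mu(H,G)=(-1)^{\rho(G)-\rho(H)}$ and the identification $\rho(G)=\chi(G)+1$ (cited from \cite{beniamini2020bipartite}) to match it with $(-1)^{\chi}$. The only cosmetic difference is that you verify the \emph{inverted} identity $f(K)=\sum_{M\in\PM(K)}\mu_{\mathcal{L}_n}(M,K)$ directly, whereas the paper starts from $\Per(G)=\sum_{H\le G}\UBPMn(H)$ and inverts forward; the content is identical.
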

\begin{proof}
	The proof is centered around the combinatorial \textit{lattice of matching-covered graphs},
	$$ \mathcal{L}_n = \left(\MCn \cup \{\hat{0}\}, \subseteq\right), \text{  where $\hat{0}$ is the graph with $2n$ isolated vertices} $$
	where the order relation for this lattice is \textit{containment over the edge set}, i.e., $G \supseteq H \iff E(G) \supseteq E(H)$. Let us consider the following two functions $f: \mathcal{L}_n \to \{0,1\}$ and $h: \mathcal{L}_n \to \mathbb{Z}$ on the lattice, which are the restrictions of $\UBPMn$ and of the Permanent function, respectively:
	\begin{equation*}
		\begin{split}
			\forall G \in (\MCn \cup \{\hat{0}\}):\ f(G) &= \UBPMn(G) = \begin{cases} 
			1 & G \in PM(K_{n,n}) \\
			0 & \text{otherwise}
			\end{cases} \\
			h(G) &= \Per(G) = \#\text{Perfect Matchings in $G$}
		\end{split}
	\end{equation*}
	These two functions are intimately related. Indeed, for any element $G$ of the lattice, one can compute $h(G)$ by taking the sum $f(H)$ over all $H$ in the downwards closed interval $[\hat{0}, G]$. Therefore, by an application of M\"obius' Inversion Formula (Theorem \ref{thm:mobius_inversion_formula}) to the matching-covered lattice, we obtain:
	$$ \forall G \in \mathcal{L}_n:\ h(G) = \sum_{G \supseteq H \in \mathcal{L}_n} f(H) \ \iff\ \forall G \in \mathcal{L}_n:\ f(G) = \sum_{G \supseteq H \in \mathcal{L}_n} \mu(H,G) h(H) $$
	where $\mu: \mathcal{L}_n \rightarrow \mathbb{Z}$ is the M\"obius function of the lattice $\mathcal{L}_n$. A well known result due to Billera and Sarangarajan \cite{billera1994combinatorics} states that $\mathcal{L}_n$ is isomorphic to the \textit{face lattice} of the Birkhoff Polytope $B_n$, which is the convex hull of all $n \times n$ permutation matrices. Consequently, $\mathcal{L}_n$ is an Eulerian lattice -- and its M\"obius function $\mu$ is can be directly computed (see e.g. \cite{Stanley:2011:ECV:2124415}), as follows:
	$$ \forall G,H \in \mathcal{L}_n,\ H \subseteq G:\ \ \mu(H,G) = (-1)^{\rank(G)-\rank(H)}$$
	where $\rank(x)$ denotes the maximal length of a chain from $\hat{0}$ to $x$ (equivalently, $\rank(x) = dim(f_x)+1$, where $f_x$ is the face of $B_n$ corresponding to the lattice element $x$). In \cite{beniamini2020bipartite} it was shown that the rank of every graph $G$ in the matching-covered lattice is exactly $\chi(G)+1$, where $\chi(G) = e(G) - v(G) + c(G)$ is the cyclomatic number, a topological quantity. Recalling our prior application of M\"obius inversion, we obtain the following set of identities (note that the bottom element can be omitted, as $\Per(\hat{0})$ is zero):
	$$
		\forall G \in \mathcal{L}_n:\ (-1)^{\chi(G)}\sum_{G \supseteq H \in \MCn} (-1)^{\chi(H)} \Per(H) = \begin{cases} 
			1 & G \in \PM(K_{n,n}) \\
			0 & \text{otherwise} 
		\end{cases}
	 $$
	 To conclude the proof, let us consider the following real multilinear polynomial, wherein we assign weight $(-1)^{\chi{G}} \Per(G)$ to every matching-covered graph: 
	 $$
	 	p(x_{1,1}, \dots, x_{n,n}) =  \sum_{G \in \MCn} (-1)^{\chi(G)} \Per(G) \prod_{(i,j) \in E(G)} x_{i,j}
	 $$
	 Let $G \subseteq K_{n,n}$ and observe that, by construction:
	 $$ p(G) = \sum_{H \in \MCn} (-1)^{\chi(H)} \Per(H) \cdot \mathbbm{1}\big\{E(H) \subseteq E(G) \big\}= \sum_{G \supseteq H \in \MCn} (-1)^{\chi(H)} \Per(H) $$
	 It remains to show that $p$ ``agrees'' with $\UBPMn$ on all inputs. It is not hard to see that it suffices to show this claim only for matching-covered graphs, since given any $G \subseteq K_{n,n}$ which is \textit{not} matching-covered, one may consider the graph $G'$ formed by the union of all perfect matching in $G$ (in other words, the maximal matching-covered graph contained in $G$). By construction, we have $p(G') = p(G)$, and by definition, $\UBPMn(G) = \UBPMn(G')$ -- thus, hereafter we consider only inputs $G \in \MCn$. First, let us check the two trivial cases; the empty graph, and a single matching:
	 $$ p(\hat{0}) = 0 \text{, and } p(M) = (-1)^{\chi(M)} = (-1)^{n - 2n + n} = 1 \ \ \ \ \forall M \in PM(K_{n,n}) $$
	 Finally, for any matching-covered graph $G$ containing \textit{more than a single matching}, i.e. $G \in \MCn$ such that $G \notin PM(K_{n,n})$, it holds that:
	 $$ p(G) = \sum_{G \supseteq H \in \MCn} (-1)^{\chi(H)} \Per(H) = 0 $$
	 where the last equality follows from the identities obtained through M\"obius' Inversion Formula. Thus, $p(x_{1,1}, \dots, x_{n,n})$ agrees with $\UBPMn$ everywhere, and is its \textit{unique} representation. 
\end{proof}

\subsection{Indicators on the Matching-Covered Lattice}

We remark that the proof of Theorem \ref{thm:ubpm_poly} readily extends, through the same analysis using M\"obius inversion, to any arbitrary \textit{indicator function} over the matching-covered lattice. For any set $S \subseteq \MCn$, let $I_S: \{0,1\}^{n^2} \to \{0,1\}$ be the Boolean function
$$
	\forall G \subseteq K_{n,n}:\ I_S(G) = \mathbbm{1}\left\{ H \in S \text{ where } H=\bigcup_{M \in \PM(G)}M \right\}
$$
Then, the multilinear polynomial representing $I_S$ is given by
$$
	I_S(x_{1,1}, \dots, x_{n,n}) = \sum_{G \in \MCn}  \left( (-1)^{\chi(G)} \sum_{H \in [\hat{0}, G] \cap S} (-1)^{\chi(H)+1} \right) \prod_{(i,j) \in E(G)} x_{i,j}
$$

\subsection{Evasiveness and Generalized Decision Trees}

The characterization of $\UBPMn$ as a multilinear polynomial can be used to derive several complexity-theoretic corollaries. Firstly, this polynomial has \textit{full total degree over $\mathbb{R}$} and thus (see e.g. \cite{buhrman2002complexity}):
\begin{corollary}
$\UBPMn$ is \textit{evasive}, i.e., any decision computing it has full depth, $n^2$.
\end{corollary}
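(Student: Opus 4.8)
The plan is to combine two standard facts: first, that the deterministic decision tree complexity of any Boolean function is at least its real (total) degree, i.e.\ $\D(f) \ge \deg(f)$ (see e.g.\ \cite{buhrman2002complexity}); and second, that the polynomial produced in Theorem~\ref{thm:ubpm_poly} has a nonvanishing monomial of full size $n^2$, so that $\deg(\UBPMn) = n^2$. Since $\UBPMn$ depends on only $n^2$ variables, $\D(\UBPMn) \le n^2$ trivially, so these two facts pin down $\D(\UBPMn) = n^2$ exactly, which is the definition of evasiveness.

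The only thing requiring any verification is that the top monomial $\prod_{(i,j)} x_{i,j}$, corresponding to $G = K_{n,n}$, has a nonzero coefficient. By Theorem~\ref{thm:ubpm_poly} this coefficient is $(-1)^{\chi(K_{n,n})}\Per(K_{n,n})$ whenever $K_{n,n}$ is matching-covered. I would observe that $K_{n,n}$ is indeed matching-covered: any edge $(i,j)$ can be extended to a perfect matching of $K_{n,n}$ (match $i$ to $j$, then match the remaining $n-1$ vertices on each side by any bijection). Moreover $\Per(K_{n,n}) = n! \ne 0$. Hence the coefficient $(-1)^{\chi(K_{n,n})} n!$ is nonzero, and $\deg(\UBPMn) = n^2$.

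There is no real obstacle here — the corollary is immediate from the theorem once one recalls the degree lower bound on decision tree depth. If anything, the point worth stressing (as the surrounding text already does) is that this argument is genuinely needed because $\UBPMn$ is \emph{not} monotone, so one cannot appeal to the topological evasiveness results for monotone graph properties \cite{kahn1984topological}; the algebraic route via full degree is what makes evasiveness go through. I would also note in passing that the same one-line argument shows $\UBPMn$ is evasive for any query model whose queries are at least as powerful as single-bit queries, though the sharper bounds for $\XORDT$ and $\ANDDT$ in the companion corollary require a separate, more careful analysis of the monomial structure rather than just the degree.
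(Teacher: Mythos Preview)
Your proof is correct and matches the paper's approach exactly: the paper simply remarks that the polynomial of Theorem~\ref{thm:ubpm_poly} has full total degree (which you verify by checking the $K_{n,n}$ coefficient is $(-1)^{\chi(K_{n,n})}n!\ne0$) and then invokes $\D(f)\ge\deg(f)$ from \cite{buhrman2002complexity}. One small quibble with your closing aside: saying the same argument yields evasiveness ``for any query model whose queries are at least as powerful as single-bit queries'' is backwards---more powerful queries can only \emph{lower} depth, which is precisely why the $\XORDT$ and $\ANDDT$ bounds require the separate monomial-counting and $\mathbb{F}_2$-degree arguments rather than just real degree.
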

Let us remark that, contrary to its counterpart $\BPMn$ which is a \textit{monotone} bipartite graph property and thus known to be evasive \cite{yao1988monotone}, the \textit{unique} matching function is \textit{not monotone} and for such functions evasiveness is not guaranteed (see \cite{lovasz2002lecture} for one such example). Theorem \ref{thm:ubpm_poly} can be also used to derive strong bounds (near evasiveness) versus larger classes of decision trees, for example trees whose internal nodes are labeled by arbitrary conjunctions of the input bits (hereafter $\ANDDT$), and by arbitrary parity functions ($\XORDT$). It is known \cite{beniamini2020bipartite} that the depth of any $\ANDDT$ computing a Boolean function $f$ is at least $\log_3{|\mon(f)|}$. Applying this to $\UBPMn$ and recalling that asymptotically almost all balanced bipartite graphs are matching-covered (\cite{beniamini2020bipartite}), we have:
\begin{corollary}
Any $\ANDDT$ computing $\UBPMn$ has depth at least $(\log_3 2) \cdot n^2 - o_n(1)$. 
\end{corollary}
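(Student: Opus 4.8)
The plan is to combine two inputs. First, it is known \cite{beniamini2020bipartite} that every Boolean function $f$ satisfies $\DAND(f) \ge \log_3 |\mon(f)|$, so it suffices to lower bound the sparsity of $\UBPMn$. Second, Theorem~\ref{thm:ubpm_poly} gives the unique multilinear representation of $\UBPMn$ explicitly, with the monomial indexed by $G \subseteq K_{n,n}$ carrying coefficient $(-1)^{\chi(G)}\Per(G)$ when $G$ is matching-covered and $0$ otherwise. The first step is therefore to observe that this coefficient is nonzero for \emph{every} matching-covered graph: if $G$ is matching-covered then each of its edges lies in a perfect matching, so $\Per(G) \ge 1$, and hence $\mon(\UBPMn) = \MCn$ exactly (the empty graph, the only graph with vanishing permanent, has coefficient $0$, consistently with $\hat 0 \notin \MCn$).

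The second step is a counting estimate, which I would simply import: asymptotically almost all balanced bipartite graphs are matching-covered \cite{beniamini2020bipartite}, i.e. $|\MCn| = \big(1 - o_n(1)\big)\,2^{n^2}$. Taking logarithms base $3$,
\[
	\log_3 |\mon(\UBPMn)| \;=\; \log_3\!\Big( \big(1-o_n(1)\big)\,2^{n^2}\Big) \;=\; (\log_3 2)\, n^2 + \log_3\!\big(1-o_n(1)\big) \;=\; (\log_3 2)\, n^2 - o_n(1),
\]
since $\log_3\!\big(1-o_n(1)\big) \to 0$ from below. Feeding this into the $\ANDDT$ lower bound yields $\DAND(\UBPMn) \ge (\log_3 2)\, n^2 - o_n(1)$, as claimed.

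The argument is short because the substantive work is inherited: the structural input is Theorem~\ref{thm:ubpm_poly}, and the quantitative input is the (nontrivial) combinatorial fact that a uniformly random subgraph of $K_{n,n}$ is matching-covered with probability $1-o_n(1)$, together with the $\ANDDT$-versus-sparsity inequality of \cite{beniamini2020bipartite}. The only point requiring care is that the $o_n(1)$ \emph{fraction} of non-matching-covered graphs costs merely an \emph{additive} $o_n(1)$ after passing to logarithms — which it does, as displayed above — so I do not expect any genuine obstacle beyond correctly assembling the cited results.
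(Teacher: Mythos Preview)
Your proof is correct and follows essentially the same approach as the paper: invoke the $\DAND(f) \ge \log_3 |\mon(f)|$ bound from \cite{beniamini2020bipartite}, use Theorem~\ref{thm:ubpm_poly} to identify $\mon(\UBPMn)$ with $\MCn$ (you spell out the step $\Per(G)\ge 1$ for matching-covered $G$ a bit more explicitly than the paper does), and then apply the fact that almost all bipartite graphs are matching-covered. One minor phrasing quibble: the empty graph is not ``the only graph with vanishing permanent'' in general, only among the graphs in $\MCn \cup \{\hat 0\}$; but this does not affect the argument.
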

As for parity decision trees, it is well known that the depth of any such tree is bounded by the total degree of its unique representing polynomial, over $\mathbb{F}_2$ (see \cite{o2014analysis, beniamini2020bipartite}). Noting that $\Per(G) \equiv \det(G) \pmod 2$, we may write the $\mathbb{F}_2$-polynomial representation of $\UBPMn$ as follows
$$
	\UBPMn\left(x_{1,1}, \dots, x_{n,n}\right) = \sum_{\substack{G \in \MCn \\ \det(G) \equiv 1 \pmod 2}} \prod_{(i,j) \in E(G)} x_{i,j}
$$

Clearly this polynomial does not have full degree for any $n > 1$, as $\Per(K_{n,n})$ is $n! \equiv 0 \pmod 2$\footnote{It is well known (\cite{o2014analysis}) that for any function $f: \{0,1\}^n \rightarrow \{0,1\}$, $\deg_{2}(f) = n \iff |f^{-1}(1)| \equiv 1 \pmod 2$. Therefore we obtain that the number of graphs $G \subseteq K_{n,n}$ containing a \textit{unique perfect matching} is even, for any $n>1$.}. Nevertheless, we claim that its $\mathbb{F}_2$-degree is at most a constant factor away from full. Observe that its monomials constitute precisely of all graphs that are both matching-covered, and whose biadjacency matrices are invertible over $\mathbb{F}_2$, i.e., are elements of the group $\GL_n(\mathbb{F}_2)$. However, asymptotically almost all graphs are matching-covered, and by a standard counting argument, the order of $\GL_n(\mathbb{F}_2)$ satisfies
$$ 
	\Pr_{A \sim M_{n}(\mathbb{F}_2)}[A \in \GL_n(\mathbb{F}_2)] = \left(\tfrac12; \tfrac12\right)_\infty \pm o_n(1)
$$
where $\left(\tfrac12; \tfrac12\right)_\infty \approx 0.28878$ is a Pochhammer symbol. Thus by a standard Chernoff argument, there exists a matching-covered graph with odd determinant and at least $\tfrac{1}2 n^2 - o_n(1)$ edges.
\begin{corollary}
	$\DXOR(\UBPMn) \ge \deg_{2}(\UBPMn) \ge \left(\tfrac 12 - o_n(1)\right) n^2$
\end{corollary}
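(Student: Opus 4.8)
The first inequality is a standard fact: a parity decision tree of depth $d$ computes a function whose unique $\mathbb{F}_2$-multilinear representation has degree at most $d$ (each leaf is reached by conjoining $\le d$ affine $\mathbb{F}_2$-conditions, and the indicator of such a cube is an $\mathbb{F}_2$-polynomial of degree $\le d$), so $\DXOR(\UBPMn) \ge \deg_2(\UBPMn)$; this is already recorded in the text and I would simply cite it. Everything therefore reduces to the second inequality. Using the explicit $\mathbb{F}_2$-representation displayed above, $\deg_2(\UBPMn)$ is exactly the maximum edge count of a graph $G \subseteq K_{n,n}$ that is simultaneously matching-covered and has biadjacency matrix lying in $\GL_n(\mathbb{F}_2)$ (equivalently $\Per(G)$ odd, since $\Per \equiv \det \pmod 2$). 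So the task is to produce one such graph with at least $\left(\tfrac12 - o_n(1)\right)n^2$ edges.

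My plan is a probabilistic existence argument. Sample $G \subseteq K_{n,n}$ by retaining each of the $n^2$ possible edges independently with probability $\tfrac12$; then its biadjacency matrix $A$ is uniformly distributed over $M_n(\mathbb{F}_2)$. I would track three events: (i) $G \in \MCn$; (ii) $A \in \GL_n(\mathbb{F}_2)$; (iii) $e(G) \ge \tfrac12 n^2 - n^{3/2}$. For (i), invoke the result of \cite{beniamini2020bipartite} that a uniformly random balanced bipartite graph is matching-covered with probability $1 - o_n(1)$. For (ii), use the classical count of invertible matrices, $\Pr\!\left[A \in \GL_n(\mathbb{F}_2)\right] = \prod_{k=1}^{n}\left(1 - 2^{-k}\right) = \left(\tfrac12;\tfrac12\right)_\infty \pm o_n(1) \approx 0.28878$. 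For (iii), note $e(G)$ is a sum of $n^2$ i.i.d.\ Bernoulli$(\tfrac12)$ variables with mean $\tfrac12 n^2$, so a Hoeffding bound gives $\Pr\!\left[e(G) < \tfrac12 n^2 - n^{3/2}\right] \le e^{-\Omega(n)} = o_n(1)$.

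A union bound then closes the argument: $\Pr[(\mathrm{i}) \wedge (\mathrm{ii}) \wedge (\mathrm{iii})] \ge \Pr[(\mathrm{ii})] - \Pr[\neg(\mathrm{i})] - \Pr[\neg(\mathrm{iii})] \ge \left(\tfrac12;\tfrac12\right)_\infty - o_n(1) > 0$ for all large $n$. Hence, for every sufficiently large $n$, there exists a graph $G$ that is matching-covered, has odd determinant, and has at least $\tfrac12 n^2 - n^{3/2} = \left(\tfrac12 - o_n(1)\right)n^2$ edges; this $G$ indexes a monomial of that degree in the $\mathbb{F}_2$-representation of $\UBPMn$, giving $\deg_2(\UBPMn) \ge \left(\tfrac12 - o_n(1)\right)n^2$.

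I do not expect a genuine obstacle. The only points requiring a little care are that events (i) and (ii) are clearly \emph{not} independent — but this is harmless, because I only ever bound the complements of (i) and (iii), and those are union-bounded against the \emph{constant} lower bound on $\Pr[(\mathrm{ii})]$, so no independence among the three events is needed — and that the deviation $n^{3/2}$ in (iii) must be chosen large enough that the Hoeffding error $e^{-\Omega(n)}$ truly dominates the $o_n(1)$ coming from the matching-covered estimate, which it comfortably does, while still being $o(n^2)$ so as not to affect the leading constant.
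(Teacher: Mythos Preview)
Your proposal is correct and follows essentially the same approach as the paper: the paper also reduces to exhibiting a matching-covered graph with odd determinant and many edges, invokes the $1-o_n(1)$ matching-covered density from \cite{beniamini2020bipartite}, the constant $(\tfrac12;\tfrac12)_\infty$ density of $\GL_n(\mathbb{F}_2)$, and a ``standard Chernoff argument'' for the edge count. You simply make explicit the union bound and the deviation scale that the paper leaves implicit.
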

%
%
%
%

\section{The Dual Polynomial}

In this section we consider the Boolean dual function (Definition \ref{defn:boolean_dual}) of $\UBPMn$.

\begin{definition}
	The function $\UBPMnstar: \{0,1\}^{n^2} \to \{0,1\}$ is defined
	$$ \UBPMnstar(x_{1,1}, \dots, x_{n,n}) = \begin{cases} 1 & \big\{ (i,j) : x_{i,j} = 0\big\} \subseteq K_{n,n} \text{ does \underline{not} have a unique P.M.}  \\ 0 & \text{otherwise} \end{cases} $$
\end{definition}

In what follows, we provide a full characterization of polynomial representing $\UBPMnstar$. This description relies heavily on the that of another dual function -- $\BPMnstar$ -- which is the dual of the bipartite perfect matching function $\BPMn$ (which is defined identically to $\UBPMn$, but without the \textit{uniqueness} condition). 
%
%
The polynomial representation of $\BPMnstar$ was obtained in a series of papers \cite{beniamini2020bipartite, beniamini2020approximate}. Its monomials correspond to a family of graphs called ``\textit{totally ordered graphs}'', and their coefficients are can be computed through a normal-form block decomposition of the aforementioned graphs. The full details are presented in \cite{beniamini2020approximate}, and are omitted here for brevity. In what follows, it suffices for us to denote
$$
	\BPMnstar(x_{1,1}, \dots, x_{n,n}) = \sum_{G \subseteq K_{n,n}} a^\star_G \prod_{(i,j) \in E(G)} x_{i,j}
$$
Under this notation, our characterization of $\UBPMnstar$ is the following. 

\begin{theorem}
	\label{thm:ubpmnstar_poly}
	The unique polynomial representation of $\UBPMnstar: \{0,1\}^{n^2} \rightarrow \{0,1\}$ is 
	$$ \UBPMnstar(x_{1,1}, \dots, x_{n,n}) = \sum_{G \subseteq K_{n,n}} c^\star_G \prod_{(i,j) \in E(G)} x_{i,j} $$
	where for every $G \subseteq K_{n,n}$ we have:
	$$ c^\star_G = \Per(G) \cdot a_G^\star + \sum_{M \notin \PM(G)} (-1)^{|E(M) \setminus E(G)|} \cdot a_{G \cup M}^\star
	$$
\end{theorem}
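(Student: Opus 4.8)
The plan is to derive the coefficients $c^\star_G$ by relating $\UBPMnstar$ back to $\BPMnstar$ through the definition of the Boolean dual. Recall that $\UBPMnstar(x) = 1 - \UBPMn(\bar x)$ where $\bar x$ is the bitwise complement, and likewise $\BPMnstar(x) = 1 - \BPMn(\bar x)$. The natural first step is to write $\UBPMn$ in terms of $\BPMn$ on the level of Boolean functions: a graph $H \subseteq K_{n,n}$ fails to have a unique perfect matching precisely when it either has no perfect matching, or has at least two. Equivalently, on the complement side, $\UBPMnstar(x) = 1$ iff the graph of $0$-coordinates of $x$ is \emph{not} uniquely matchable. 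I would therefore look for a clean identity of the form $\UBPMn = \BPMn - (\text{indicator of ``$\ge 2$ matchings''})$, and then observe that the ``$\ge 2$ matchings'' part can be captured by a small inclusion–exclusion over single perfect matchings adjoined to the complement graph. Concretely, dualising, the plan is to establish a pointwise identity
$$ \UBPMnstar(x) = \BPMnstar(x) + \sum_{M \in \PM(K_{n,n})} (-1)^{?}\,[\text{something supported on } x \text{ and } M], $$
and then read off the monomial expansion.

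The cleaner route, which I would actually pursue, goes through the multilinear polynomials directly rather than through Boolean identities. Write $\BPMnstar = \sum_G a^\star_G \prod_{(i,j)\in E(G)} x_{i,j}$. I want to express $\UBPMnstar$ as a polynomial modification of this. The key structural fact to isolate is: for a graph $H$, ``$H$ has a perfect matching and it is not unique'' is detected by looking at pairs (perfect matching $M \subseteq H$, edge set $H$). On the dual/complement side this corresponds, after the substitution $x_{i,j} \mapsto 1 - x_{i,j}$ and expansion, to terms indexed by $G$ together with an adjoined matching $M$; the sign $(-1)^{|E(M)\setminus E(G)|}$ is exactly what the complement substitution $\prod (1-x) = \sum_{S} (-1)^{|S|}\prod_{i \in S} x_i$ produces when one expands a monomial supported on $G \cup M$ and restricts attention to the sub-monomial on $G$. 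So the second step is: take the known relation between $\UBPMn$ and $\BPMn$ as Boolean functions (every subgraph that has a unique PM is a subgraph that has a PM, and one ``subtracts'' a correction counting multiplicities of matchings), translate it into polynomial coefficients, and dualise by the complement substitution. The term $\Per(G) \cdot a^\star_G$ should fall out as the ``$M \subseteq G$'' contributions (there are $\Per(G)$ such $M$, each contributing $a^\star_G$ with sign $(-1)^0$), while the sum $\sum_{M \notin \PM(G)}(-1)^{|E(M)\setminus E(G)|} a^\star_{G\cup M}$ collects the contributions where $M$ adds at least one edge outside $G$.

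The main obstacle I anticipate is pinning down precisely the Boolean-level identity linking $\UBPMn$ (or $\UBPMnstar$) to $\BPMnstar$ and to a matching-indexed correction, with the correct signs — in particular verifying that the correction term, which a priori could involve arbitrarily many perfect matchings of $H$ (since a graph can have exponentially many), collapses to a \emph{linear} combination over \emph{single} matchings $M$ with the coefficients $a^\star_{G\cup M}$ of $\BPMnstar$. I expect this collapse to rely on a cancellation identity among the $a^\star$ coefficients analogous to the Möbius-type identities used in the proof of Theorem~\ref{thm:ubpm_poly} (there, $\sum_{G \supseteq H \in \MCn}(-1)^{\chi(H)}\Per(H) = 0$ for $G$ with more than one matching). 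So after setting up the dual substitution, the crux is a bookkeeping lemma: grouping the expanded terms by the underlying monomial $G$ and showing all higher-order matching contributions either vanish or telescope into the stated two-part formula. Once that lemma is in hand, the remaining steps — checking the empty graph, checking a single matching $M$ (where $\Per = 1$ and the sum over $M' \notin \PM(M)$ should reproduce the dual coefficient correctly), and invoking uniqueness of the multilinear representation — are routine.
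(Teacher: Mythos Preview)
Your plan circles around the right computation but misidentifies both the starting point and the main obstacle. The paper does not seek any Boolean-level identity between $\UBPMn$ and $\BPMn$; it starts directly from the polynomial of Theorem~\ref{thm:ubpm_poly}, which already supplies the crucial ingredient: the coefficient of each matching-covered $H$ in $\UBPMn$ is $(-1)^{\chi(H)}\Per(H)$. Dualising via $x_{i,j}\mapsto 1-x_{i,j}$ then gives, for every $G\subseteq K_{n,n}$,
\[
  c^\star_G \;=\; (-1)^{e(G)+1}\sum_{G\subseteq H\in\MCn}(-1)^{\chi(H)}\Per(H).
\]
Now the ``collapse to single matchings'' you worry about is immediate and needs no cancellation lemma: write $\Per(H)=\sum_{M\in\PM(K_{n,n})}\mathbbm{1}\{M\subseteq H\}$ and exchange the order of summation. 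For each fixed $M$, the inner sum $\sum_{G\cup M\subseteq H\in\MCn}(-1)^{\chi(H)}$ is, up to the sign $(-1)^{e(G\cup M)+1}$, precisely the known expression for $a^\star_{G\cup M}$ from \cite{beniamini2020bipartite}. Splitting the outer sum over $M$ into the cases $M\in\PM(G)$ (where $G\cup M=G$, yielding $\Per(G)\,a^\star_G$) and $M\notin\PM(G)$ (yielding $(-1)^{|E(M)\setminus E(G)|}\,a^\star_{G\cup M}$, since $e(G\cup M)-e(G)=|E(M)\setminus E(G)|$) finishes the proof in a few lines.

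So the gap in your plan is not a missing telescoping identity but a missing entry point: you never commit to using Theorem~\ref{thm:ubpm_poly} as the input, and instead try to relate $\UBPMn$ to $\BPMn$ as Boolean functions. That route genuinely is hard---there is no short Boolean formula expressing ``exactly one perfect matching'' in terms of ``at least one'', and pursuing it is what conjures the phantom problem of higher-order matching contributions. Once you feed in Theorem~\ref{thm:ubpm_poly}, there are no higher-order terms to telescope, because the permanent is already a linear sum over single matchings.
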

\begin{proof}
	The polynomial representing $\UBPMnstar$ can be expressed using $\UBPMn$, via duality:
	$$ \UBPMnstar(x_{1,1}, \dots, x_{n,n}) = 1-\UBPMn(1-x_{1,1}, \dots, 1-x_{n,n}) $$
	Substituting the characterization of Theorem \ref{thm:ubpm_poly} and expanding, we deduce that the coefficient of every graph $G \subseteq K_{n,n}$ in $\UBPMn$ is:
	$$
		c^\star_G = (-1)^{e(G)+1} \sum_{G \subseteq H \in \MCn} (-1)^{\chi(H)} \Per(H)
	$$
	Writing $\Per(H) = \sum_{M \in \PM(K_{n,n})} \mathbbm{1}\{M \subseteq H \}$ and exchanging order of summation,
	$$ c^\star_G = (-1)^{e(G)+1} \sum_{M \in \PM(K_{n,n})} \sum_{G \subseteq H \in \MCn} (-1)^{\chi(H)} \mathbbm{1} \{M \subseteq H \} $$
	There are two possible cases in the above summation over all perfect matchings; either the matching is present in $G$, or it is not. Clearly every matching-covered graph containing $G$ also contains any matching of $G$, so in the former case we get a contribution of $(-1)^{e(G)+1}\Per(G) \cdot \sum_{G \subseteq H \in \MCn} (-1)^{\chi(H)}$. As for the latter case, observe that for every $M \notin \PM(G)$, the set of matching-covered graphs containing $G$ and $M$ is exactly all matching-covered graphs containing $G \cup M$. Finally, we recall \cite{beniamini2020bipartite} that the coefficient of any graph $G \subseteq K_{n,n}$ in $\BPMnstar$ is given by:
	$$
		a_G^\star = (-1)^{e(G)+1} \sum_{G \subseteq H \in \MCn} (-1)^{\chi(H)}
	$$
	Putting the two together and simplifying, we obtain: 
	$$
		c^\star_G = \Per(G) \cdot a_G^\star + \sum_{M \notin \PM(G)} (-1)^{|E(M) \setminus E(G)|} \cdot a_{G \cup M}^\star \qedhere
	$$
\end{proof}

\subsection{Corollary: The $\ell_1$-norm of $\UBPMnstar$}

One immediately corollary of Theorem \ref{thm:ubpmnstar_poly} is the following fact: the multilinear polynomial representing $\UBPMnstar$ has \textit{very low $\ell_1$-norm} -- i.e., it has few monomials, and the coefficient of every such monomial is not too large. A similar bound had previous been attained for $\BPMnstar$ in \cite{beniamini2020approximate}, which we heavily rely on for our proof. 
\begin{corollary}
	\label{cor:l1norm_ubpmnstar}
	The $\ell_1$-norm of $\BPMnstar$ is bounded only by $ \log_2 \norm{\UBPMnstar}_1 = \Theta(n \log n)$.
\end{corollary}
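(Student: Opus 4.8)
The plan is to bound $\norm{\UBPMnstar}_1$ above and below. For the \emph{upper bound}, I would start from the formula of Theorem~\ref{thm:ubpmnstar_poly}, $c^\star_G = \Per(G)\cdot a_G^\star + \sum_{M\notin\PM(G)}(-1)^{|E(M)\setminus E(G)|}a_{G\cup M}^\star$, and estimate $\norm{\UBPMnstar}_1 = \sum_G |c^\star_G|$ by the triangle inequality. Each term $|c^\star_G|$ is at most $\Per(G)\,|a_G^\star| + \sum_{M}|a_{G\cup M}^\star|$, and summing over all $G\subseteq K_{n,n}$ I would regroup the contributions by the graph whose $\BPMnstar$-coefficient is being invoked. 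A given coefficient $a_H^\star$ is hit once through the first term (by $G=H$, with multiplicity $\Per(H)\le n!$) and once through the second term for each way of writing $H = G\cup M$ with $M$ a perfect matching not contained in $G$; the number of such decompositions is at most $\Per(H)\cdot 2^n$ (choose the matching $M\subseteq H$ in at most $\Per(H)$ ways, then $G$ is $H$ minus an arbitrary subset of $E(M)$, giving at most $2^n$ choices, though one must be slightly careful that $G$ still needs $M\not\subseteq G$). Hence $\norm{\UBPMnstar}_1 \le \big(n! + n!\,2^n\big)\sum_{H} |a_H^\star| = 2^{O(n\log n)}\cdot\norm{\BPMnstar}_1$, and by the result of \cite{beniamini2020approximate} we have $\log_2\norm{\BPMnstar}_1 = O(n\log n)$, so the whole thing is $2^{O(n\log n)}$.

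For the \emph{lower bound}, the cleanest route is to exploit the relationship between a function and its dual. The $\ell_1$-norm of the $\{0,1\}$-polynomial is always at least the Fourier ($\{\pm1\}$) $\ell_1$-norm, and the Fourier spectra of $\UBPMn$ and $\UBPMnstar$ agree up to sign, so $\norm{\UBPMnstar}_1 \ge \fouriernorm{\UBPMnstar} = \fouriernorm{\UBPMn}$. It therefore suffices to lower bound the spectral norm of $\UBPMn$ by $2^{\Omega(n\log n)}$. I would do this by the standard argument that the spectral norm is at least the rank of the communication matrix (in fact $\rank M_f \le \norm{f}_1$ for any partition), combined with the $\Omega(n\log n)$ log-rank lower bound advertised in the introduction for the two-party unique-matching task; alternatively, one can lower bound $\fouriernorm{\UBPMn}$ directly via a large Fourier coefficient or via the $\BPMnstar$ norm transferred through Theorem~\ref{thm:ubpmnstar_poly}. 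Either way the matching $\Omega(n\log n)$ bound on $\log_2$ of the norm follows, completing the $\Theta(n\log n)$ claim.

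The main obstacle I anticipate is the bookkeeping in the upper bound: making the regrouping over decompositions $H = G\cup M$ genuinely rigorous, in particular controlling the multiplicity with which each $a_H^\star$ is charged and verifying that the crude count $\Per(H)\cdot 2^n$ is valid (one must account for the constraint $M\notin\PM(G)$ and avoid double counting when several matchings of $H$ give rise to the same $G$). None of this affects the asymptotics — every overcount is absorbed into the $2^{O(n\log n)}$ slack — but it is the step that needs care. The lower bound, by contrast, is essentially immediate once one invokes the duality of Fourier spectra and the already-established log-rank bound; the only subtlety there is citing the correct direction of the inequality $\fouriernorm{f}\le\norm{f}_1$ and the fact that dualization preserves $|\hat f|$.
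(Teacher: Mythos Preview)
Your upper bound is essentially the paper's argument, just organized differently: the paper bounds each $|c^\star_G|$ by $n!\cdot 2^{2n}$ (using that every $|a^\star_H|\le 2^{2n}$) and then separately bounds the sparsity of $\UBPMnstar$ via the ``ball'' $B(H)=\{G: G\cup M=H \text{ for some }M\}$ around each $H\in\mon(\BPMnstar)$, of size at most $n!\cdot 2^n$. Your regrouping-by-$a_H^\star$ does both at once and appeals to $\norm{\BPMnstar}_1=2^{O(n\log n)}$ rather than to the coefficient and sparsity bounds separately; the bookkeeping you flag is exactly the ball count the paper carries out, and your worries about overcounting are indeed absorbed in the $2^{O(n\log n)}$ slack.

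Your lower bound, however, is substantially more complicated than needed and contains a gap. The paper simply observes that the top monomial of $\UBPMnstar$ has coefficient $\pm n!$: for any $f$, the coefficient of the full monomial in $f^\star$ equals (up to sign) the coefficient of the full monomial in $f$, and by Theorem~\ref{thm:ubpm_poly} the latter is $(-1)^{\chi(K_{n,n})}\Per(K_{n,n})=\pm n!$. That single coefficient already gives $\norm{\UBPMnstar}_1\ge n!$.

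By contrast, your route through Fourier norm is shaky. The step ``spectral norm is at least the rank'' is not a standard inequality: what holds is $\rank M_f\le |\text{Fourier support of }f|$, and since $|\hat f(S)|\le 1$ for bounded $f$ one has $\fouriernorm{f}\le|\text{Fourier support}|$, so the inequality points the wrong way. Your parenthetical fallback $\rank M_f\le\norm{f}_1$ (in the $\{0,1\}$-basis, using that the coefficients are integers) is correct and would salvage the argument without passing through Fourier at all --- but then you are invoking the fooling-set lower bound of Theorem~\ref{thm:rank_ubpm}, which is proved \emph{after} this corollary in the paper. There is no logical circularity (only the upper half of Theorem~\ref{thm:rank_ubpm} cites this corollary), but it is an unnecessary forward reference when the one-line ``top coefficient is $\pm n!$'' argument is available.
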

\begin{proof}
	For the upper bound, we rely heavily on Theorem \ref{thm:ubpmnstar_poly} and on the $\ell_1$-norm of $\BPMnstar$ obtained in \cite{beniamini2020approximate}. In the latter, it was shown that every coefficient in $\BPMnstar$ has magnitude at most $2^{2n}$, and thus using the characterization of Theorem \ref{thm:ubpmnstar_poly}, the coefficient of any graph $G$ satisfies
	 $$ \log_2 |c^\star_G| \le \log_2 \left(\Per(G) \cdot 2^{2n} + \left(n! - \Per(G)\right) \cdot 2^{2n}\right) \le n \log_2 n + n \log_2\left(4/e\right) + \Theta(\log n) $$
	It remains to bound the \textit{sparsity} of $\UBPMnstar$. To this end, consider the graphs whose coefficients do not vanish in $\BPMnstar$, and let us take a ``ball'' around every such graph $G \in \mon(\BPMnstar)$, as follows:
	$$ B(G) = \Big\{ H \subseteq K_{n,n} : \exists M \in \PM(K_{n,n}) \text{ such that } E(H) \cup E(M) = E(G) \Big\} $$
	From Theorem \ref{thm:ubpm_poly} it follows that for every graph $G$, the coefficient $c^\star_G$ \textit{does not vanish} only if either $G \in \mon(\BPMnstar)$ or there exists some $H \in \mon(\BPMnstar)$ such that $G \in B(H)$. However, each of the aforementioned balls is relatively small (in fact, can be bounded by $|B(G)| \le 2^{n} \cdot n!$), thus by the union bound:
	$$ |\mon(\UBPMnstar)| \le |\mon(\BPMnstar)|\left(1 + 2^n \cdot n!\right) = 2^{\Theta(n \log n)} $$ 
	where the last equality follows from the bound $\log_2 |\mon(\BPMnstar)| \le 2n \log_2 n + \mathcal{O}(n)$, obtained in \cite{beniamini2020bipartite}. This concludes the proof of the upper bound. The lower bound now follows directly from Theorem \ref{thm:ubpm_poly}, as it suffices to observe that the coefficient of the complete bipartite graph is $\pm \Per(K_{n,n}) = \pm (n!)$.
\end{proof}

\section{The Communication Rank of Unique Bipartite Matching}
\label{sec:com_rank}

\subsection{Rank and Polynomial Representation}

The log-rank of a Boolean function is very closely related to its representation as a multilinear polynomial. This relationship is made very evident in the case of certain ``lifted'' functions: given a Boolean function $f: \{0,1\}^n \to \{0,1\}$, one can define the following pair of functions $f_\land, f_\oplus: \{0,1\}^n \times \{0,1\}^n \to \{0,1\}$, where 
$$ \forall x,y \in \{0,1\}^n: f_\land(x,y) = f(x \land y), \text{ and } f_\oplus(x,y) = f(x \oplus y) $$
It is well known \cite{knop2021log, bernasconi1999spectral} that the rank of the communication matrices $M_{f_\land}$ and $M_{f_\oplus}$ is \textit{exactly characterized} by the \textit{sparsity} (i.e., number of monomials) of the polynomials representing $f$ in the $\{0,1\}$-basis and the $\{\pm 1\}$-basis (the Fourier basis), respectively. In other words,
\begin{align*}
	\quad\quad\quad\quad\rank (M_{f_\land}) &= \#\{\text{monomials in $\{0,1\}$-polynomial representing $f$}\} \\
	\rank (M_{f_\oplus}) &= \#\{\text{monomials in $\{-1,1\}$-polynomial representing $f$}\}
\end{align*}
The polynomial representation of a Boolean function $f$ over the $\{0,1\}$-basis, or that of its dual $f^\star$, can also be used to derive communication rank upper bounds for non-lifted functions. The following lemma gives such a bound for the communication task of $f$, under \textit{any} input partition.

\begin{lemma}
\label{lemma:rank_upper_bound}
	Let $f: \{0,1\}^n \rightarrow \{0,1\}$. Then, for every partition $S \sqcup \bar{S} = [n]$ we have:
	$$ \rank(M_{f^{S \sqcup \bar{S}}}) \le \min\big\{\left|\mon(f)\right|, \left|\mon(f^\star)\right| +  1\big\} $$ 
\end{lemma}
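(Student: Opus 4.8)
The plan is to expand the multilinear representation of $f$ over the given partition and read off a low-rank decomposition of the communication matrix directly. First I would write $f(z_1,\dots,z_n) = \sum_{T \in \mon(f)} a_T \prod_{i\in T} z_i$, and under the partition $S \sqcup \bar S$ substitute Alice's bits $x$ for the coordinates in $S$ and Bob's bits $y$ for those in $\bar S$. Each monomial then factors as $\big(\prod_{i \in T\cap S} x_i\big)\big(\prod_{i\in T\cap \bar S} y_i\big)$, so the corresponding entrywise matrix is the outer product of the vector $u_T$ with $(u_T)_x = \prod_{i\in T\cap S}x_i$ and the vector $v_T$ with $(v_T)_y = \prod_{i\in T\cap\bar S}y_i$ — a rank-one matrix. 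Summing over $T\in\mon(f)$ exhibits $M_{f^{S\sqcup\bar S}}$ as a sum of $|\mon(f)|$ rank-one matrices, and by subadditivity of rank this gives $\rank(M_{f^{S\sqcup\bar S}}) \le |\mon(f)|$.

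For the second term of the minimum I would pass through the Boolean dual. By Definition~\ref{defn:boolean_dual}, $f(z) = 1 - f^\star(\mathbf 1 - z)$, hence $M_{f^{S\sqcup\bar S}}(x,y) = 1 - f^\star\big(\mathbf 1 - x,\ \mathbf 1 - y\big)$ under the same partition. The map $(x,y)\mapsto (\mathbf 1 - x,\ \mathbf 1 - y)$ is a bijection of the row index set and of the column index set, so the matrix $(x,y)\mapsto f^\star(\mathbf 1 - x,\mathbf 1 - y)$ is obtained from $M_{(f^\star)^{S\sqcup\bar S}}$ by permuting rows and columns and therefore has the same rank, which by the first paragraph (applied to $f^\star$) is at most $|\mon(f^\star)|$. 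Writing $M_{f^{S\sqcup\bar S}}$ as the all-ones matrix $J$ minus this permuted copy of $M_{(f^\star)^{S\sqcup\bar S}}$, and using $\rank(J)=1$ together with subadditivity, $\rank(M_{f^{S\sqcup\bar S}}) \le 1 + |\mon(f^\star)|$. Taking the smaller of the two bounds yields the claim.

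There is essentially no serious obstacle here — the only points requiring care are (i) that the affine substitution $z \mapsto \mathbf 1 - z$ genuinely acts as a permutation of the index sets of rows and columns, so that rank is preserved rather than merely bounded; and (ii) tracking the additive $+1$ coming from the all-ones matrix, which is exactly the slack by which a function's dual sparsity may exceed its primal communication rank (as in $\ANDn^\star = \ORn$, where the rank is $2$ but $|\mon(\ORn)| = 2^n-1$).
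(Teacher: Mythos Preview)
Your proposal is correct and follows essentially the same approach as the paper: write $M_{f^{S\sqcup\bar S}}$ as a sum of $|\mon(f)|$ rank-one outer products coming from the monomials of $f$, and for the dual bound use $M_{f^{S\sqcup\bar S}} = J - P\, M_{(f^\star)^{S\sqcup\bar S}}\, Q$ with $P,Q$ row/column permutations induced by bitwise complementation, picking up the additive $+1$ from $J$. The only quibble is your closing illustrative example, which is a bit garbled (for $\AND_n$ under any nontrivial partition the communication matrix has rank $1$, not $2$; the tight case for the $+1$ is rather $\OR_n$, whose matrix has rank $2 = |\mon(\AND_n)|+1$), but this aside does not affect the argument.
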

\begin{proof}
Let $S \sqcup \bar{S} = [n]$ be some input partition, and let $M$ and $M'$ be the communication matrices of $f$ and $f^\star$ under this partition, respectively. By definition of Boolean duality, we have
$M = J - M_\pi M' M_\sigma $
where $J = \mathbbm{1} \otimes \mathbbm{1}$ is the all-ones matrix, and $M_\pi$, $M_\sigma$ are the permutation matrices for
$$ \forall x \subseteq S:\ \pi(x) = S \setminus x,\ \ \forall y \subseteq \bar{S}:\ \sigma(y) = \bar{S} \setminus y $$
therefore $|\rank(M) - \rank(M')| \le 1$, and it suffices to bound the rank of $M'$. However, we now observe that the polynomial representing $f$ naturally induces a $|\mon(f)|$-rank decomposition of $M$ (and likewise $f^\star$ for $M'$), as per \cite{nisan1995rank}, by considering the following sum of rank-1 matrices: 
$$ \forall T \in \mon(f) \text{, add the rank-1 matrix } a_T \cdot \left(\mathbbm{1}_X \otimes \mathbbm{1}_Y\right) $$
where $a_T$ is the coefficient of $T$ in $f$, and
$$
X = \Big\{x : \left(T \cap S\right) \subseteq x \subseteq S \Big\},\ \ Y = \Big\{y : \left(T \cap \bar{S}\right) \subseteq y \subseteq \bar{S} \Big\} \qedhere
$$
\end{proof}

\subsection{The Rank of Unique Bipartite Matching}

The log-rank of the unique bipartite matching function, ranging over all input partitions, is exactly characterized in the following Theorem.

\begin{theorem}
	\label{thm:rank_ubpm}
	The log-rank of unique bipartite perfect matching is
	$$ \max_{E \sqcup \bar{E} = E(K_{n,n})} \log \rank(M_{\UBPM_n^{E \sqcup \bar{E}}}) = \Theta (n \log n) $$
	where $\UBPM_n^{E \sqcup \bar{E}}$ is the two-party function whose input is partitioned according to $E \sqcup \bar{E}$.
\end{theorem}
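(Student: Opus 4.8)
The plan is to establish the two bounds separately. For the upper bound, I would invoke Lemma~\ref{lemma:rank_upper_bound}: for any partition $E \sqcup \bar{E}$ of the edges of $K_{n,n}$, the rank of the communication matrix $M_{\UBPM_n^{E \sqcup \bar{E}}}$ is at most $|\mon(\UBPMnstar)| + 1$. By Corollary~\ref{cor:l1norm_ubpmnstar} (or rather its sparsity estimate, which is proven en route to the $\ell_1$-norm bound), we have $|\mon(\UBPMnstar)| = 2^{\Theta(n\log n)}$, so $\log\rank(M_{\UBPM_n^{E \sqcup \bar{E}}}) = \mathcal{O}(n\log n)$ uniformly over all partitions. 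This handles the $\mathcal{O}(n\log n)$ direction in one line, and is the ``easy'' half since all the work was already done in the sparsity analysis of the dual polynomial.

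For the lower bound, I need to exhibit a single partition $E \sqcup \bar{E}$ under which $\log\rank(M_{\UBPM_n^{E \sqcup \bar{E}}}) = \Omega(n\log n)$. The natural route is through a fooling set: recall that $\log\fs(f) \le 2\log\rank(f) + 2$ (Dietzfelbinger--Hromkovi\v{c}--Schnitger, as cited in the preliminaries), so it suffices to construct a fooling set of size $2^{\Omega(n\log n)}$ for some partition. I would split the edges of $K_{n,n}$ so that, say, Alice holds the edges incident to the first $n/2$ left-vertices and Bob holds the rest (or some similarly balanced split), and then use the family of permutation matrices as the combinatorial backbone: a perfect matching $M$ of $K_{n,n}$, viewed as an input, is a $1$-input of $\UBPMn$, and if we take the graph to be exactly $M$ then it is matching-covered with a unique perfect matching. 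For two distinct perfect matchings $M_1 \ne M_2$, I'd need to check that the ``mixed'' graph obtained by taking Alice's half of $M_1$ together with Bob's half of $M_2$ (and vice versa) fails to have a unique perfect matching — typically it will either fail to be a perfect matching at all (leaving some vertices unmatched while double-covering others), which is a $0$-input, or in the rare case it happens to contain a perfect matching it will not be unique. Choosing the partition and the subfamily of permutations carefully — e.g. restricting to permutations that differ in a controlled way across the Alice/Bob boundary — should guarantee the fooling-set property for a subfamily of size $(n/2)!^{\,2}$ or so, which is already $2^{\Omega(n\log n)}$.

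The main obstacle is the lower bound, specifically verifying the fooling-set condition cleanly: one must ensure that for every pair of distinct matchings in the chosen subfamily, at least one of the two cross-combinations is genuinely a $0$-input of $\UBPMn$, i.e. does \emph{not} induce a unique perfect matching. The delicate point is ruling out ``accidental'' unique matchings in the cross graphs; the cleanest fix is probably to choose the subfamily so that Alice's halves all share a common vertex-cover structure forcing any cross graph to leave a vertex isolated (hence no perfect matching at all, trivially a $0$-input), which makes the fooling property immediate. Alternatively, one could bypass fooling sets entirely and lower-bound the rank directly by restricting $M_{\UBPM_n^{E \sqcup \bar{E}}}$ to a combinatorial rectangle indexed by partial matchings on each side and arguing the resulting submatrix is (after row/column operations) a large identity or triangular matrix — this is morally the same computation but sometimes yields the rank bound without the factor-$2$ loss from the DHS inequality. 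Either way, once a $2^{\Omega(n\log n)}$ lower bound on rank (or fooling-set size) is in hand for one partition, combining with the universal upper bound gives the claimed $\Theta(n\log n)$.
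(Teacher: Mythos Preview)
Your upper bound is exactly the paper's: apply Lemma~\ref{lemma:rank_upper_bound} together with the sparsity bound $|\mon(\UBPMnstar)| = 2^{\Theta(n\log n)}$ from Corollary~\ref{cor:l1norm_ubpmnstar}. Your overall lower-bound \emph{strategy} also matches the paper's --- a fooling set under a left-vertex split, followed by the Dietzfelbinger--Hromkovi\v{c}--Schnitger inequality.

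The gap is in the fooling set itself. Inputs that are \emph{just} perfect matchings cannot give a fooling set of size $2^{\Omega(n\log n)}$ under the partition you describe. If Alice holds the edges incident to the first $n/2$ left vertices, then her half of a perfect matching $M$ is determined by a bijection from those vertices onto some set $S(M)\subseteq [n]$ of $n/2$ right vertices, and Bob's half is a bijection onto $[n]\setminus S(M)$. Whenever two matchings $M_1\neq M_2$ satisfy $S(M_1)=S(M_2)$, \emph{both} cross-combinations are again perfect matchings on $n$ edges --- hence $1$-inputs of $\UBPMn$ --- so the pair violates the fooling property. Consequently any fooling set of perfect matchings must use pairwise distinct image sets $S(\cdot)$, capping its size at $\binom{n}{n/2}=2^{O(n)}$. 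Your proposed fix (``make all Alice halves share a common vertex-cover structure'') goes the wrong way: forcing a common $S$ makes every cross-combination a perfect matching, not a $0$-input.

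The paper sidesteps this by taking fooling-set inputs that are \emph{not} bare matchings: for each $\pi\in S_{n/2}$, Alice holds $\id(A,C)\sqcup\pi(A,D)$ (a union of two partial matchings) and Bob holds $\id(B,D)$ together with a $\pi$-dependent ``upper-triangular'' block of $B$--$C$ edges. Each such graph has the identity matching as its \emph{unique} perfect matching (shown by the absence of alternating cycles), while any cross-combination with $\sigma\neq\pi$ acquires an explicit alternating cycle and hence has \emph{at least two} perfect matchings. So the $0$-inputs in the fooling condition arise from \emph{too many} matchings, not \emph{too few}; this is the key idea your proposal is missing. The family has size $(n/2)!=2^{\Theta(n\log n)}$, and then DHS finishes exactly as you outlined.
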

\begin{proof}
	To obtain the lower bound, we must first fix a particular input partition. Assume without loss of generality that $n = 2m$ and let us partition the left and right vertices into two sets, $L = A \sqcup B$, $R = C \sqcup D$, where $A = \{a_1, \dots, a_m\}$, $B = \{b_1, \dots, b_m\}$, $C = \{c_1, \dots, c_m\}$ and $D = \{d_1, \dots, d_m\}$.
	Hereafter we consider the input partition wherein Alice receives all the edges incident to the left vertices $A$ and Bob receives all the edges incident to the left vertices $B$. To prove our lower bound, we shall construct a \textit{fooling set} (Definition \ref{def:fooling_set}). Let us introduce some notation: for every permutation $\pi \in S_m$ and two sets $X,Y \in \{A,B,C,D\}$, the notation $\pi(X,Y)$ refers to the matching from $X$ to $Y$ using the permutation $\pi$. Formally,
	$$ \forall X,Y \in \{A,B,C,D\}: \forall \pi \in S_m:\ \pi(X,Y) \eqdef \Big\{ \{x_i, y_{\pi(i)} \} : i \in [m] \}\Big\} $$
	Under this notation, we claim that
	$$ S = \Big\{ \left(\id(A,C)\ \sqcup\ \pi(A,D),\ \ \id(B,D)\ \sqcup\ \big\{ \{b_{\pi(i)}, c_j \} : 1 \le i < j \le m \big\} \right)\ :\ \pi \in S_m \Big\} \text{ } $$
	is a fooling set for $\UBPM_n^{K_{A,R} \sqcup K_{B,R}}$, where $\id \in S_m$ is the identity element. 
	\begin{figure}[h]
		\centering
		\includegraphics[height=5.2cm]{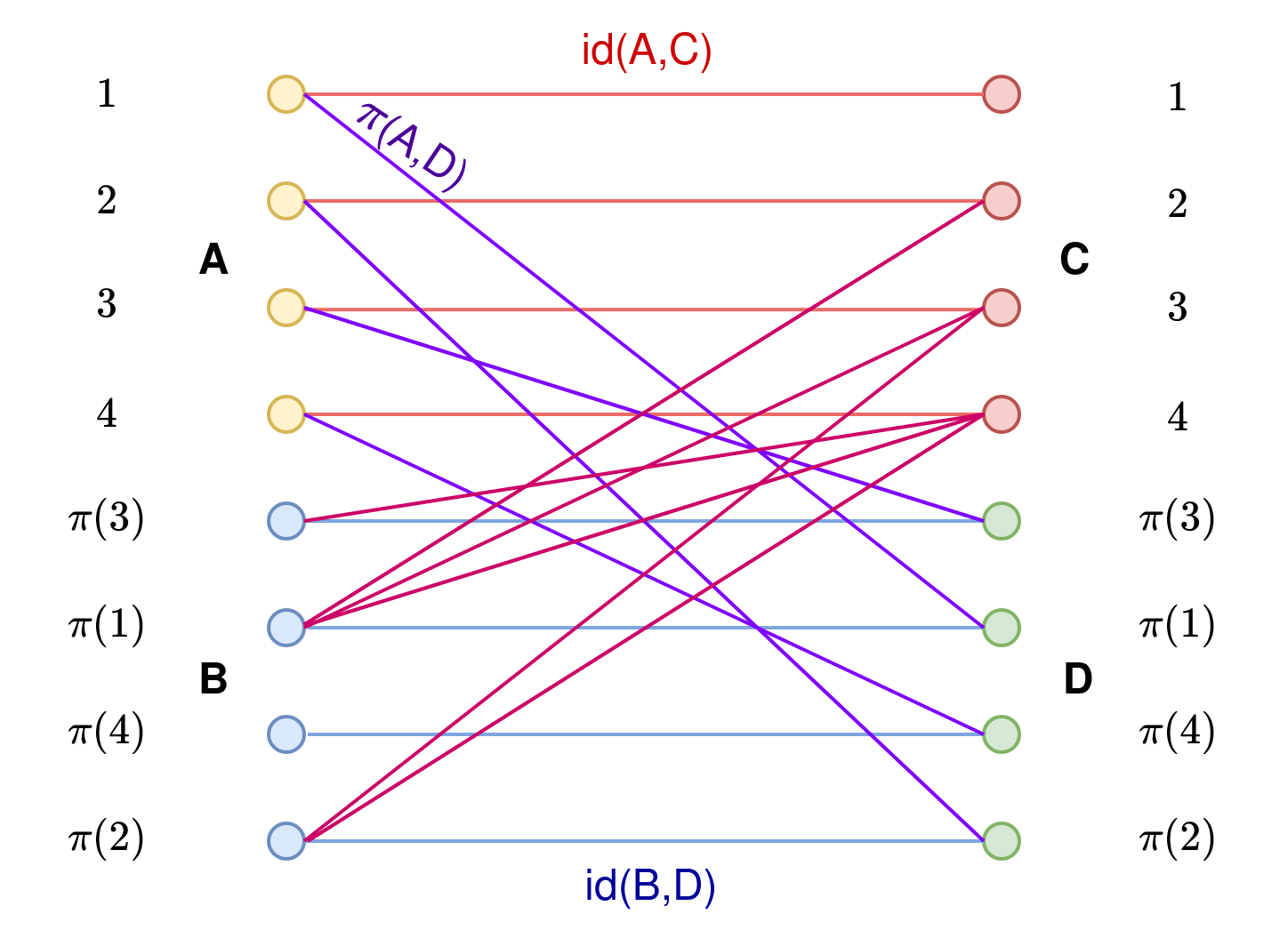}
		\caption{A graph $G$ in the fooling set $S$, for $m=4$ and $\pi=(2413)$.}
		\label{fig:fooling_set}
	\end{figure}\\
	\underline{$\{x \sqcup y : (x,y) \in S\} \subseteq \UBPM_n^{-1}(1)$:} Let $\pi \in S_m$ and consider $G \subseteq K_{n,n}$ where:
	$$ E(G) = \id(A,C)\ \sqcup\ \pi(A,D)\ \sqcup\ \id(B,D)\ \sqcup\ \big\{ \{b_{\pi(i)}, c_j \} \big\}_{i < j} $$
	Clearly $G$ has the identity perfect matching, whereby $A$ is matched to $C$ and $B$ to $D$. Let us denote this matching by $M$. To show that $M$ is \textit{unique}, it suffices to show that there exists no $M$-alternating cycle in $G$. By construction, the vertices in any such cycle must alternate between $C - A - D - B$ (since the only edges joining $A \leftrightarrow C$ and $B \leftrightarrow D$ are those in the matching $M$). Thus, for any $i \in [m]$, an $M$-alternating path starting with $c_i$ must be of the form:
	$$ c_i \sim a_i \sim d_{\pi(i)} \sim b_{\pi(i)} \sim c_{j} \sim \dots $$
	where $j > i$. However, observe that $b_{\pi(m)}$ is not adjacent to any vertex in $C$, so any such path will eventually (after at most $m$ passes through $B$) terminate at $b_{\pi(m)}$, without looping back to $c_i$. Therefore there exists no $M$-alternating cycle, and $M$ is indeed unique. \\ \ \\
	\underline{$\forall (x_1,y_1), (x_2, y_2) \in S:\ (x_1 \sqcup y_2) \in \UBPMn^{-1}(0)$:} Let $\pi, \sigma \in S_m$ where $\pi \ne \sigma$, and let $G$ be the graph:
	$$
		E(G) = \id(A,C)\ \sqcup\ \pi(A,D)\ \sqcup\ \id(B,D)\ \sqcup\ \big\{ \{b_{\sigma(i)}, c_j \} \big\}_{i < j}
	$$
	Once again, clearly $G$ has the identity matching $M$, whereby $A$ is matched to $C$ and $B$ to $D$. To show that $M$ is not unique, it suffices to exhibit an alternating cycle. Recall that $\sigma \ne \pi$ and therefore $\sigma^{-1} \circ \pi \ne \id$, and in particular, there exists some $i \in [m]$ such that $\sigma^{-1}(\pi(i)) < i$. By construction, the following $M$-alternating cycle is present in $G$:
	$ c_i \sim a_i \sim d_{\pi(i)} \sim b_{\pi(i)} = b_{\sigma\left(\sigma^{-1}\left(\pi\left(i\right)\right)\right)} \sim c_i $. \\ \ \\
	Therefore, $S$ is a fooling set for $\UBPMn$ under the aforementioned input partition. To conclude the lower bound, we recall the following Theorem, due to Dietzfelbinger, Hromkovi{\v{c}} and Schnitger \cite{dietzfelbinger1996comparison}:
	\begin{theorem}[\cite{dietzfelbinger1996comparison}]
		\label{thm:fooling_set_bounds_log_rank}
		$\forall f: \{0,1\}^{m} \times \{0,1\}^{n} \rightarrow \{0,1\}$ we have $\log_2 \fs(f) \le 2 \left( \log_2 \rank M_f + 1 \right) $.
	\end{theorem}
	\ \vspace{-0.15in} \\
	Therefore, we have:
	$$ \log_2 \rank \left(M_{\UBPM_n^{K_{A,R} \sqcup K_{B,R}}}\right) \ge \tfrac12 \log_2 |S| - 1 = \tfrac14 n \log_2 n - \Theta(n) $$
	concluding the lower bound. As for the upper bound, it follows directly from Lemma \ref{lemma:rank_upper_bound}, and from the characterization of Theorem \ref{thm:ubpmnstar_poly} (see Corollary \ref{cor:l1norm_ubpmnstar}). 
\end{proof}


%
%
%
%
%
%
%
%
%
%
%
%

\bibliography{unique_matching_polynomial}

\begin{thebibliography}{10}

\bibitem{aaronson2020degree}
Scott Aaronson, Shalev Ben-David, Robin Kothari, Shravas Rao, and Avishay Tal.
\newblock Degree vs. approximate degree and quantum implications of {H}uang's
  sensitivity theorem.
\newblock In {\em Proceedings of the 53rd Annual ACM SIGACT Symposium on Theory
  of Computing}, 2021.

\bibitem{beniamini2020approximate}
Gal Beniamini.
\newblock The approximate degree of bipartite perfect matching.
\newblock {\em arXiv preprint arXiv:2004.14318}, 2020.

\bibitem{beniamini2020bipartite}
Gal Beniamini and Noam Nisan.
\newblock Bipartite perfect matching as a real polynomial.
\newblock In {\em Proceedings of the 53rd Annual ACM SIGACT Symposium on Theory
  of Computing}, 2021.

\bibitem{bernasconi1999spectral}
Anna Bernasconi and Bruno Codenotti.
\newblock Spectral analysis of boolean functions as a graph eigenvalue problem.
\newblock {\em IEEE transactions on computers}, 48(3):345--351, 1999.

\bibitem{billera1994combinatorics}
Louis~J Billera and Aravamuthan Sarangarajan.
\newblock The combinatorics of permutation polytopes.
\newblock In {\em Formal power series and algebraic combinatorics}, volume~24,
  pages 1--23, 1994.

\bibitem{buhrman2002complexity}
Harry Buhrman and Ronald De~Wolf.
\newblock Complexity measures and decision tree complexity: a survey.
\newblock {\em Theoretical Computer Science}, 288(1):21--43, 2002.

\bibitem{bun2021guest}
Mark Bun and Justin Thaler.
\newblock Guest column: Approximate degree in classical and quantum computing.
\newblock {\em ACM SIGACT News}, 51(4):48--72, 2021.

\bibitem{dietzfelbinger1996comparison}
Martin Dietzfelbinger, Juraj Hromkovi{\v{c}}, and Georg Schnitger.
\newblock A comparison of two lower-bound methods for communication complexity.
\newblock {\em Theoretical Computer Science}, 168(1):39--51, 1996.

\bibitem{edmonds1965paths}
Jack Edmonds.
\newblock Paths, trees, and flowers.
\newblock {\em Canadian Journal of mathematics}, 17:449--467, 1965.

\bibitem{fenner2019bipartite}
Stephen Fenner, Rohit Gurjar, and Thomas Thierauf.
\newblock Bipartite perfect matching is in {Q}uasi-{NC}.
\newblock {\em SIAM Journal on Computing}, 50(3):STOC16--218, 2019.

\bibitem{gabow2001unique}
Harold~N Gabow, Haim Kaplan, and Robert~E Tarjan.
\newblock Unique maximum matching algorithms.
\newblock {\em Journal of Algorithms}, 40(2):159--183, 2001.

\bibitem{golumbic2001uniquely}
Martin~Charles Golumbic, Tirza Hirst, and Moshe Lewenstein.
\newblock Uniquely restricted matchings.
\newblock {\em Algorithmica}, 31(2):139--154, 2001.

\bibitem{hetyei1964rectangular}
G{\'a}bor Hetyei.
\newblock Rectangular configurations which can be covered by 2$\times$1
  rectangles.
\newblock {\em P{\'e}csi Tan. Foisk. K{\"o}zl}, 8:351--367, 1964.

\bibitem{hoang2006bipartite}
Thanh~Minh Hoang, Meena Mahajan, and Thomas Thierauf.
\newblock On the bipartite unique perfect matching problem.
\newblock In {\em International Colloquium on Automata, Languages, and
  Programming}, pages 453--464. Springer, 2006.

\bibitem{hopcroft1973n}
John~E Hopcroft and Richard~M Karp.
\newblock An $n^{5/2}$ algorithm for maximum matchings in bipartite graphs.
\newblock {\em SIAM Journal on computing}, 2(4):225--231, 1973.

\bibitem{huang2019induced}
Hao Huang.
\newblock Induced subgraphs of hypercubes and a proof of the sensitivity
  conjecture.
\newblock {\em Annals of Mathematics}, 190(3):949--955, 2019.

\bibitem{kahn1984topological}
Jeff Kahn, Michael Saks, and Dean Sturtevant.
\newblock A topological approach to evasiveness.
\newblock {\em Combinatorica}, 4(4):297--306, 1984.

\bibitem{knop2021log}
Alexander Knop, Shachar Lovett, Sam McGuire, and Weiqiang Yuan.
\newblock Log-rank and lifting for {AND}-functions.
\newblock In {\em Proceedings of the 53rd Annual ACM SIGACT Symposium on Theory
  of Computing}, pages 197--208, 2021.

\bibitem{kozen1985nc}
Dexter Kozen, Umesh~V Vazirani, and Vijay~V Vazirani.
\newblock {NC} algorithms for comparability graphs, interval graphs, and
  testing for unique perfect matching.
\newblock In {\em International Conference on Foundations of Software
  Technology and Theoretical Computer Science}, pages 496--503. Springer, 1985.

\bibitem{communication_complexity_book}
Eyal Kushilevitz and Noam Nisan.
\newblock {\em Communication Complexity}.
\newblock Cambridge University Press, USA, 1996.

\bibitem{lovasz1979determinants}
L{\'a}szl{\'o} Lov{\'a}sz.
\newblock On determinants, matchings, and random algorithms.
\newblock In {\em FCT}, volume~79, pages 565--574, 1979.

\bibitem{lovasz1988lattices}
L{\'a}szl{\'o} Lov{\'a}sz and Michael Saks.
\newblock Lattices, {M}\"obius functions and communications complexity.
\newblock In {\em [Proceedings 1988] 29th Annual Symposium on Foundations of
  Computer Science}, pages 81--90. IEEE Computer Society, 1988.

\bibitem{lovasz2002lecture}
Laszlo Lovasz and Neal~E Young.
\newblock Lecture notes on evasiveness of graph properties.
\newblock {\em arXiv preprint cs/0205031}, 2002.

\bibitem{mehlhorn1982vegas}
Kurt Mehlhorn and Erik~M Schmidt.
\newblock Las vegas is better than determinism in {VLSI} and distributed
  computing.
\newblock In {\em Proceedings of the fourteenth annual ACM symposium on Theory
  of computing}, pages 330--337, 1982.

\bibitem{nisan2021demand}
Noam Nisan.
\newblock The demand query model for bipartite matching.
\newblock In {\em Proceedings of the 2021 ACM-SIAM Symposium on Discrete
  Algorithms (SODA)}, pages 592--599. SIAM, 2021.

\bibitem{nisan1995rank}
Noam Nisan and Avi Wigderson.
\newblock On rank vs. communication complexity.
\newblock {\em Combinatorica}, 15(4):557--565, 1995.

\bibitem{o2014analysis}
Ryan O'Donnell.
\newblock {\em Analysis of {B}oolean Functions}.
\newblock Cambridge University Press, 2014.

\bibitem{plummer1986matching}
M.D. Plummer and L.~Lov{\'a}sz.
\newblock {\em Matching Theory}.
\newblock North-Holland Mathematics Studies. Elsevier Science, 1986.

\bibitem{sherstov2020algorithmic}
Alexander~A Sherstov.
\newblock Algorithmic polynomials.
\newblock {\em SIAM Journal on Computing}, 49(6):1173--1231, 2020.

\bibitem{Stanley:2011:ECV:2124415}
Richard~P. Stanley.
\newblock {\em Enumerative {C}ombinatorics: {V}olume 1}.
\newblock Cambridge University Press, New York, NY, USA, 2nd edition, 2011.

\bibitem{yao1988monotone}
Andrew Chi-Chih Yao.
\newblock Monotone bipartite graph properties are evasive.
\newblock {\em SIAM Journal on Computing}, 17(3):517--520, 1988.

\end{thebibliography}
\appendix

\section{The Approximate Degree of $\UBPMn$}
\label{appendix:apx_deg}

The $\varepsilon$-approximate degree $\epsadeg f$, of a Boolean function $f: \{0,1\}^n \to \{0,1\}$ is the \textit{least} degree of a real multilinear polynomial \textit{approximating} $f$ pointwise over $\{0,1\}^n$, with error at most $\varepsilon$. Formally,

\begin{definition}
	Let $f: \{0,1\}^n \rightarrow \{0,1\}$ and let $0 < \varepsilon < \tfrac{1}{2}$. The $\varepsilon$-approximate degree of $f$, $\epsadeg{f}$, is the least degree of a real multilinear polynomial $p \in \mathbb{R}[x_1, \dots, x_n]$ such that:
	$$ \forall x \in \{0,1\}^n:\ |f(x) - p(x)| \le \varepsilon $$ 
	If $\varepsilon = \sfrac13$, then we omit the subscript in the above notation, and instead write $\adeg f$.
\end{definition}

Approximate degree is a well-studied complexity measure. For a comprehensive survey on the topic, we refer the reader to \cite{bun2021guest}. With regards to Theorem~\ref{thm:ubpm_dual_polynomial}, we make the following observation: every Boolean function whose polynomial representation, or that of its dual, have low $\ell_1$-norm -- can be efficiently \textit{approximated} in the $\ell_\infty$-norm by a low-degree polynomial. Firstly, it is not hard to see that for any Boolean function $f: \{0,1\}^n \rightarrow \{0,1\}$ and any $\varepsilon>0$, the $\varepsilon$-approximate degree of $f$ is identical to that of its dual $f^\star$. This follows since $f^\star$ can be obtained through an \textit{affine transformation} of $f$, which cannot increase the degree, and the same transformation can similarly be applied to any approximating polynomial of $f$ (and the converse follows since $(f^\star)^\star = f$). The second component of the approximation scheme is the following lemma.

\begin{lemma}[\cite{beniamini2020approximate}, similar to \cite{sherstov2020algorithmic}]
	\label{lemma:low_l1_apx_deg}
	Let $f: \{0,1\}^n \rightarrow \{0,1\}$ be a Boolean function, and let $p \in \mathbb{R}[x_1, \dots, x_n]$ be its representing polynomial, where $\norm{p}_1 \in [3, 2^n]$. Then:
	$$
	\forall \norm{p}_1^{-1} \le \varepsilon \le \frac13 :\ \ \epsadeg{f} = \mathcal{O}\left(\sqrt{n \log \norm{p}_1} \right)
	$$
\end{lemma}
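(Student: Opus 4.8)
The plan is to prove the bound by approximating the representing polynomial \emph{monomial by monomial} and recombining, exploiting the fact that the $\ell_1$-norm is exactly the quantity that controls how the per-monomial errors accumulate; this is the same strategy used in \cite{beniamini2020approximate} and, in a slightly different form, in \cite{sherstov2020algorithmic}. Write $p(x_1,\dots,x_n)=\sum_{S\subseteq[n]}a_S\prod_{i\in S}x_i$, so that $\sum_S|a_S|=\norm{p}_1=:L$. On the Boolean cube each monomial $\prod_{i\in S}x_i$ is precisely $\AND_{|S|}$ applied to the coordinates indexed by $S$, so the one genuinely non-trivial ingredient I would import is the sharp error-dependent bound on the approximate degree of conjunctions: for every $k$ and every $0<\delta\le\tfrac13$ there is a real polynomial $q_S$ in the variables $\{x_i:i\in S\}$ with $\deg(q_S)=\mathcal{O}\!\left(\sqrt{|S|\log(1/\delta)}\right)$ and $|q_S(x)-\prod_{i\in S}x_i|\le\delta$ for all $x\in\{0,1\}^n$ (Nisan--Szegedy for constant $\delta$; the logarithmic dependence on the error is due to Buhrman--Cleve--de~Wolf--Zalka --- see the survey \cite{bun2021guest}).

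Given this, set $\delta\eqdef\varepsilon\cdot L^{-1}$. The hypotheses $\norm{p}_1^{-1}\le\varepsilon\le\tfrac13$ together with $L\ge 3$ guarantee $0<\delta\le\tfrac1{3L}\le\tfrac13$, so the conjunction bound applies to every $S$. Define $\tilde p\eqdef\sum_S a_S\,q_S$ and let $\hat p$ be its multilinearization, which changes neither the values of $\tilde p$ on $\{0,1\}^n$ nor, since multilinearization never raises degree, the degree bound above. By the triangle inequality, for every $x\in\{0,1\}^n$,
$$ |\hat p(x)-f(x)| = |\tilde p(x)-p(x)| \le \sum_S|a_S|\cdot\delta = L\delta = \varepsilon, $$
so $\hat p$ is a real multilinear $\varepsilon$-approximation of $f$. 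Its degree is at most $\max_{S:a_S\ne 0}\mathcal{O}\!\left(\sqrt{|S|\log(1/\delta)}\right)\le\mathcal{O}\!\left(\sqrt{n\log(1/\delta)}\right)$. Finally, $\delta=\varepsilon/L\ge L^{-2}$, hence $\log(1/\delta)\le 2\log L=2\log\norm{p}_1$, and therefore $\epsadeg{f}\le\deg(\hat p)=\mathcal{O}\!\left(\sqrt{n\log\norm{p}_1}\right)$, as claimed.

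The main (indeed only) obstacle is the conjunction bound with the $\sqrt{\log(1/\delta)}$ dependence on the error rather than $\log(1/\delta)$: a naive amplification of the constant-error Nisan--Szegedy polynomial would give per-monomial degree $\mathcal{O}(\sqrt{n}\log\norm{p}_1)$, which is too weak to yield the stated bound. It is also worth flagging where the hypothesis $\varepsilon\ge\norm{p}_1^{-1}$ enters: it is exactly what forces the per-monomial error $\delta=\varepsilon/L$ to be no smaller than $L^{-2}$, keeping $\log(1/\delta)$ within a constant factor of $\log\norm{p}_1$; for $\varepsilon$ below this threshold the argument (and the conclusion) would degrade. Everything else --- linearity of the sum, the triangle inequality rescaled by $\norm{p}_1$, and the harmless multilinearization --- is routine.
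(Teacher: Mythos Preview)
Your proposal is correct and follows essentially the same approach sketched in the paper: replace each monomial (which is an $\AND$ over the Boolean cube) by a low-degree pointwise approximation with per-monomial error $\delta=\varepsilon/\norm{p}_1$, so that the errors recombine via the triangle inequality scaled by $\norm{p}_1$, and invoke the $\mathcal{O}\!\left(\sqrt{k\log(1/\delta)}\right)$ bound for $\widetilde{\deg}_\delta(\AND_k)$. Your identification of the role of the hypothesis $\varepsilon\ge\norm{p}_1^{-1}$ (to keep $\log(1/\delta)$ within a constant of $\log\norm{p}_1$) and of the necessity of the sharp $\sqrt{\log(1/\delta)}$ error dependence is exactly right.
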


The proof of Lemma \ref{lemma:low_l1_apx_deg} follows from the following simple approximation scheme: replace every \textit{monomial} (of sufficiently large degree) with a polynomial that approximates it pointwise, to some sufficiently small error (depending only on the $\ell_1$-norm of the representing polynomial). The full details of this scheme appeared previously in \cite{beniamini2020approximate, sherstov2020algorithmic}. Combining Lemma \ref{lemma:low_l1_apx_deg} with the $\ell_1$-bound of Corollary \ref{cor:l1norm_ubpmnstar}, we obtain:

\begin{corollary}
	\label{cor:ubpm_apx_deg}
	For any $n > 1$, and $2^{-n \log n} \le \varepsilon \le \frac13$, we have:
	$$ \epsadeg{\UBPMn} = \mathcal{O}(n^{3/2} \sqrt{\log n}) $$
\end{corollary}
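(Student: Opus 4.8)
The plan is to combine the two ingredients that the excerpt has already assembled: the $\ell_1$-norm bound for the dual polynomial (Corollary~\ref{cor:l1norm_ubpmnstar}), and the generic ``low $\ell_1$ implies low approximate degree'' statement (Lemma~\ref{lemma:low_l1_apx_deg}). First I would invoke the duality remark made just before Lemma~\ref{lemma:low_l1_apx_deg}: since $\UBPMn$ and $\UBPMnstar$ differ by an affine change of variables on both domain and range, they have the same $\varepsilon$-approximate degree for every $\varepsilon$; hence it suffices to bound $\epsadeg{\UBPMnstar}$. Then I would apply Lemma~\ref{lemma:low_l1_apx_deg} to $f = \UBPMnstar$ with $p$ its representing polynomial.

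The second step is to feed in the quantitative bound. By Corollary~\ref{cor:l1norm_ubpmnstar} we have $\log_2 \norm{\UBPMnstar}_1 = \Theta(n \log n)$; in particular $\norm{p}_1 \le 2^{O(n \log n)}$, and also $\norm{p}_1 \ge 3$ for all $n > 1$ (indeed the coefficient of $K_{n,n}$ alone has magnitude $n! \ge 3$), and $\norm{p}_1 \le 2^{n^2}$ trivially, so the hypothesis $\norm{p}_1 \in [3, 2^{n^2}]$ of the lemma is met with $n$ replaced by the number of variables $n^2$. Plugging into the conclusion of the lemma, for every $\varepsilon$ with $\norm{p}_1^{-1} \le \varepsilon \le \tfrac13$ we get
$$ \epsadeg{\UBPMnstar} = \mathcal{O}\!\left(\sqrt{n^2 \cdot \log \norm{p}_1}\right) = \mathcal{O}\!\left(\sqrt{n^2 \cdot n \log n}\right) = \mathcal{O}(n^{3/2}\sqrt{\log n}). $$
The range of admissible $\varepsilon$ is $\norm{p}_1^{-1} \le \varepsilon \le \tfrac13$, which since $\norm{p}_1 = 2^{\Theta(n\log n)}$ includes all $\varepsilon \ge 2^{-c\, n \log n}$ for a suitable constant $c$; to match the statement's $\varepsilon \ge 2^{-n\log n}$ one should be slightly careful that the constant in the exponent of the $\ell_1$ upper bound is at most $1$, or simply state the conclusion for $\varepsilon \ge 2^{-\Theta(n\log n)}$ — since this is a corollary the exact constant is cosmetic, but I would double-check the exponent from \cite{beniamini2020approximate} to present it cleanly.

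The only genuine subtlety — and the main place to be careful rather than the main obstacle — is bookkeeping the variable count: the lemma is stated for functions on $n$ bits, whereas $\UBPMn$ has $n^2$ bits, so the ``$n$'' inside the square root in Lemma~\ref{lemma:low_l1_apx_deg} becomes $n^2$ here, which is exactly what produces the $n^{3/2}$ (as $\sqrt{n^2 \cdot n\log n}$) rather than $n^{1/2}$. Everything else is a direct substitution. I would write the proof as: (i) reduce to the dual via the affine-invariance of approximate degree; (ii) check the $\ell_1$ hypotheses of Lemma~\ref{lemma:low_l1_apx_deg} using Corollary~\ref{cor:l1norm_ubpmnstar} and the trivial lower bound $n! \ge 3$; (iii) substitute $\log\norm{p}_1 = \Theta(n\log n)$ and simplify. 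No new idea is needed beyond correctly chaining the two cited results.
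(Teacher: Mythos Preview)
Your proposal is correct and matches the paper's approach exactly: the paper simply states that the corollary follows by combining Lemma~\ref{lemma:low_l1_apx_deg} with the $\ell_1$-bound of Corollary~\ref{cor:l1norm_ubpmnstar}, and you have spelled out precisely that chain (duality of approximate degree, then the lemma applied with $n^2$ variables and $\log\norm{p}_1 = \Theta(n\log n)$). Your flagged cosmetic point about the exact constant in the $\varepsilon$ range is not addressed in the paper either.
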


\section{Families of Matching Functions having Low Dual $\ell_1$-Norm}
\label{appendix:families_of_matching_functions}

The main algorithmic result in this paper is the low $\ell_1$-norm of the dual function of $\UBPMn$, from which we deduce \textit{upper bounds}, for instance on the communication rank and the approximate degree. In \cite{beniamini2020approximate}, a similar bound had been obtained for the dual of the perfect matching function, $\BPMn$. These norm bounds and their corollaries extend to a wide range of \textit{matching-related functions}, some of which are detailed below. 

\paragraph*{Functions Obtained by Restrictions.} Consider any two Boolean functions $f$ and $g$, such that $g$ is obtained by a \textit{restriction} of $f$ (i.e., by fixing some of the inputs bits of $f$). As  restrictions cannot increase the norm, it clearly holds that $\norm{g}_1 \le \norm{f}_1$ and $\norm{g^\star}_1 \le \norm{f^\star}_1$. Several intrinsically interesting matching-functions can be cast in this way. One notable example is the bipartite $k$-matching function, which is the indicator over all graphs $G \subseteq K_{n,n}$ containing a matching of size $k$.
$$ \BMnk(x_{1,1}, \dots, x_{n,n}) = \begin{cases} 1 & \big\{ (i,j) : x_{i,j} = 1\big\} \subseteq K_{n,n} \text{ has a $k$-matching } \\ 0 & \text{otherwise} \end{cases} $$
This function is obtained by a restriction of $\BPM_{2n-k}$, as follows. Label the vertices of $K_{2n-k,2n-k}$ by
\begin{equation*}
\begin{split}
	\quad\quad\quad\quad\quad\quad L &= A \sqcup V, \text{ where } A=\{a_1, \dots, a_n\}, V=\{v_1, \dots, v_{n-k}\} \\
	R &= B \sqcup U, \text{ where } B=\{b_1, \dots, b_n\}, U=\{u_1, \dots, u_{n-k}\}	
\end{split}
\end{equation*}
Given any input $G \subseteq K_{n,n}$ to $\BMnk$, the edges of $G$ are encoded via the edges joining $A$ and $B$, and moreover we fix two additional bicliques $K_{A,U}$, $K_{V,B}$. The resulting graph contains a bipartite perfect matching if and only if $G$ contains a $k$-matching, and thus
\begin{corollary*}
	\label{cor:norm_bmnk}
	For every $0 < k \le n$, we have $\log \norm{\BMnkstar}_1 = \mathcal{O}(n \log n)$.
\end{corollary*}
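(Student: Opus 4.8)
The plan is to obtain this as an immediate corollary of the restriction described just above, combined with the known $\ell_1$-norm bound for the dual of ordinary bipartite perfect matching. Three ingredients suffice: (i) the fact, recalled in the paragraph preceding the statement, that $\BMnk$ is a restriction of $\BPM_{2n-k}$ --- the input graph $G \subseteq K_{n,n}$ is carried by the $n^2$ edges of the biclique $K_{A,B}$, while all remaining edges of $K_{2n-k,2n-k}$ are fixed to constants; (ii) the principle that the Boolean dual of a restriction is again a restriction (of the dual), and that passing to a restriction cannot increase the $\{0,1\}$-polynomial $\ell_1$-norm; and (iii) the bound $\log_2 \norm{\BPM_m^\star}_1 = \mathcal{O}(m\log m)$, which, exactly as in the proof of Corollary~\ref{cor:l1norm_ubpmnstar}, follows from the magnitude bound $2^{2m}$ on each coefficient of $\BPM_m^\star$ together with the sparsity estimate $\log_2 |\mon(\BPM_m^\star)| = \mathcal{O}(m\log m)$ of \cite{beniamini2020approximate,beniamini2020bipartite}.

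The only point that merits a moment of care is ingredient (ii). Suppose $g = \restr{f}{\rho}$, where $\rho$ fixes the coordinates in some set $T$. Unwinding Definition~\ref{defn:boolean_dual}, on the free coordinates one has $g^\star(x) = 1 - g(\bar x) = 1 - f(\bar x,\rho)$; on the other hand, $\restr{f^\star}{\bar\rho}$ evaluated on the free coordinates is $1 - f$ applied to the complement of $(x,\bar\rho)$, i.e.\ $1 - f(\bar x, \rho)$ as well (complementing $\bar\rho$ returns $\rho$). Hence $g^\star = \restr{f^\star}{\bar\rho}$ is itself a restriction of $f^\star$, obtained by fixing the same coordinates $T$ to the complementary values. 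Since fixing an input bit either deletes a monomial of the $\{0,1\}$-representation or merges two monomials into one (summing their coefficients), it can never increase $\norm{\cdot}_1$; this gives $\norm{g}_1 \le \norm{f}_1$ and $\norm{g^\star}_1 \le \norm{f^\star}_1$, the inequalities invoked in the text just before the corollary.

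Applying (ii) with $f = \BPM_{2n-k}$ and $g = \BMnk$ yields $\norm{\BMnkstar}_1 \le \norm{\BPM_{2n-k}^\star}_1$, and substituting $m = 2n-k \le 2n$ into (iii) gives $\log_2 \norm{\BMnkstar}_1 = \mathcal{O}(n\log n)$, which is the claimed (one-sided) bound. I do not expect any genuine obstacle here: the argument is essentially bookkeeping --- checking that dualisation commutes with restriction once the fixed values are complemented, and that the free coordinates of the restricted $\BPM_{2n-k}$ are precisely the $n^2$ edges of $K_{A,B}$ encoding $G$ --- with all the quantitative weight carried by the imported $\BPM^\star$ norm bound of \cite{beniamini2020approximate}.
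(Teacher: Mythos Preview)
Your proposal is correct and follows exactly the paper's own argument: $\BMnk$ is a restriction of $\BPM_{2n-k}$, restrictions (and hence their duals, after complementing the fixed values) cannot increase the $\{0,1\}$-polynomial $\ell_1$-norm, and the known bound $\log\norm{\BPM_m^\star}_1 = \mathcal{O}(m\log m)$ with $m=2n-k\le 2n$ finishes. Your explicit check that $(\restr{f}{\rho})^\star = \restr{f^\star}{\bar\rho}$ is a nice justification of a step the paper simply asserts.
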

This norm bound is tight whenever $k = \alpha n$, for any \textit{constant} $0 < \alpha < 1$, as are (up to log-factors) the bounds on the approximate degree and on the log-rank.

\begin{corollary*}
	\label{cor:bm_bounds}
	Let $\alpha \in (0,1)$ be a constant. Then for every $n > 1$ and $2^{-n \log n} \le \varepsilon \le \frac13$, we have:
	$$ \log \norm{\BM^\star_{n, \alpha n}}_1 = \Theta(n \log n),\ \ \epsadeg{\BM_{n, \alpha n}} = \widetilde{\Theta}(n^{\sfrac{3}{2}}), \text{ and } \log \rank \left(\BM_{n, \alpha n} \right) = \widetilde{\Theta}(n) $$
\end{corollary*}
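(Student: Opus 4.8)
The plan is to peel the three equalities into matching $\widetilde{O}$ upper and $\widetilde{\Omega}$ lower bounds. Every upper bound is already essentially in hand. The norm bound $\log\norm{\BM^\star_{n,\alpha n}}_1 = O(n\log n)$ is exactly the preceding corollary with $k = \alpha n$ (recall $\BM_{n,\alpha n}$ is a restriction of $\BPM_{2n-\alpha n}$ and $2n-\alpha n = \Theta(n)$). Feeding the representing polynomial of $\BM^\star_{n,\alpha n}$, which has $N = n^2$ variables and $\log\norm{p}_1 = O(n\log n)$, into Lemma~\ref{lemma:low_l1_apx_deg} gives $\epsadeg{\BM^\star_{n,\alpha n}} = O\!\left(\sqrt{n^2\cdot n\log n}\right) = \widetilde{O}(n^{3/2})$ for every $\varepsilon$ in the stated range, and since a Boolean function and its dual have the same $\varepsilon$-approximate degree, the bound holds for $\BM_{n,\alpha n}$ as well. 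Finally, the $\{0,1\}$-coefficients of a Boolean function are integers, so the sparsity of $\BM^\star_{n,\alpha n}$ is at most its $\ell_1$-norm; plugging this into Lemma~\ref{lemma:rank_upper_bound} yields $\log\rank(M_{\BM_{n,\alpha n}^{S\sqcup\bar S}}) \le \log\norm{\BM^\star_{n,\alpha n}}_1 + 1 = O(n\log n) = \widetilde{O}(n)$ for every input partition $S\sqcup\bar S$.

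All three lower bounds will follow from a single structural observation: $\BPM_{\alpha n}$ is a \emph{restriction} of $\BM_{n,\alpha n}$. Concretely, split the two vertex classes of $K_{n,n}$ as $L = L_1 \sqcup L_2$ and $R = R_1 \sqcup R_2$ with $|L_1| = |R_1| = \alpha n$, fix to $0$ every edge outside $L_1\times R_1$, and leave the $\alpha n \times \alpha n$ biclique on $L_1\times R_1$ free; the resulting graph has a matching of size $\alpha n$ if and only if its $L_1\times R_1$ part has a perfect matching, so the restricted function is exactly $\BPM_{\alpha n}$. Setting $m := \alpha n = \Theta(n)$, I would then import the tight lower bounds known for bipartite perfect matching at parameter $m$: $\log\norm{\BPM^\star_m}_1 = \Omega(m\log m)$ and $\adeg{\BPM_m} = \widetilde{\Omega}(m^{3/2})$ from \cite{beniamini2020approximate}, and $\log\rank(\BPM_m) = \widetilde{\Omega}(m)$ — the last either by citing the analogous communication bound or, failing that, by running the fooling-set construction of Theorem~\ref{thm:rank_ubpm} essentially verbatim inside the free biclique, which already produces a fooling set of size $2^{\Omega(m\log m)}$ and hence log-rank $\widetilde{\Omega}(m)$ via Theorem~\ref{thm:fooling_set_bounds_log_rank}.

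It then remains to transfer each bound along the restriction. Restriction cannot increase the $\ell_1$-norm of a function or of its dual (as recalled in Appendix~\ref{appendix:families_of_matching_functions}), cannot increase $\varepsilon$-approximate degree for any fixed $\varepsilon$ (restrict the approximating polynomial), and — fixing a partition of $E(K_{n,n})$ that extends any given partition of the free edges $E(K_{\alpha n,\alpha n})$, placing the frozen edges on either side — makes the communication matrix of $\BPM_m$ a submatrix of that of $\BM_{n,\alpha n}$, so the max-over-partitions log-rank can only grow. Using in addition that $\epsadeg{\BM_{n,\alpha n}} \ge \adeg{\BM_{n,\alpha n}}$ for $\varepsilon \le \tfrac13$, we obtain $\log\norm{\BM^\star_{n,\alpha n}}_1 \ge \Omega(m\log m) = \Omega(n\log n)$, then $\epsadeg{\BM_{n,\alpha n}} \ge \adeg{\BPM_m} = \widetilde{\Omega}(n^{3/2})$, and $\log\rank(\BM_{n,\alpha n}) \ge \widetilde{\Omega}(m) = \widetilde{\Omega}(n)$, matching the upper bounds.

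The proof is short, and the genuinely load-bearing inputs are the tight lower bounds for $\BPM_m$ rather than anything new about $\BM_{n,\alpha n}$. Accordingly, the one real obstacle is whether the $\widetilde{\Omega}(m^{3/2})$ approximate-degree lower bound (and, to a lesser extent, the $\widetilde{\Omega}(m)$ log-rank lower bound) for bipartite perfect matching can simply be cited; if the former is not available off the shelf, it would have to be established directly — e.g.\ by composing a hard gadget across disjoint blocks of a matching instance and appealing to a block-composition theorem for approximate degree — and that is the step I would expect to cost real effort. The monotonicity-under-restriction claims above are routine but deserve the brief checks indicated, particularly the submatrix argument for the max-over-partitions rank.
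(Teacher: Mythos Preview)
Your upper-bound arguments coincide with the paper's. For the lower bounds, however, you take a genuinely different route: you restrict $\BM_{n,\alpha n}$ down to $\BPM_{\alpha n}$ and import the known $\widetilde{\Omega}(m^{3/2})$ approximate-degree, $\Omega(m\log m)$ dual-norm, and $\widetilde{\Omega}(m)$ log-rank bounds for $\BPM_m$. The paper instead argues each lower bound directly on $\BM_{n,k}$: for approximate degree it invokes spectral sensitivity \cite{aaronson2020degree}, exhibiting the biregular subgraph of the sensitivity graph on $(k{-}1)$-matchings versus $k$-matchings (degrees $(n-k+1)^2$ and $k$, hence spectral radius $\Theta(n^{3/2})$ when $k=\alpha n$); it then obtains the $\ell_1$-norm lower bound by contraposing Lemma~\ref{lemma:low_l1_apx_deg}; and for log-rank it builds a \emph{new} fooling set of size $\binom{k}{k/2}$, matching $A'$ and $B'$ to complementary halves of a fixed $k$-set $C$ of right vertices. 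Your restriction trick is slicker and reuses existing machinery, at the cost of a dependence on $\alpha$ in the hidden constants; the paper's direct constructions are self-contained and make the parameters explicit.

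One concrete issue with your fallback for log-rank: the fooling set of Theorem~\ref{thm:rank_ubpm} does \emph{not} transfer ``essentially verbatim'' to $\BPM_m$ or $\BM_{n,\alpha n}$. That construction is tailored to \emph{uniqueness} --- the crossed pairs there still contain a perfect matching (indeed, more than one), so they are $1$-inputs for $\BPM$/$\BM$ and the set is not fooling. If the citation for $\log\rank(\BPM_m)=\widetilde{\Omega}(m)$ is unavailable, you would need a different fooling set (for instance the complementary-halves construction the paper uses, which yields $\log_2\binom{k}{k/2}=\Theta(n)$, not $\Omega(n\log n)$).
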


The aforementioned approximate degree lower bound follows using the method of \textit{Spectral Sensitivity} -- a complexity measure due to Aaronson, Ben-David, Kothari, Rao and Tal \cite{aaronson2020degree}, based on Huang's proof of the sensitivity conjecture \cite{huang2019induced}. \cite{aaronson2020degree} proved that the approximate degree of any total function $f$ is bounded below by the spectral radius of its \textit{sensitivity graph} (i.e., the $f$-cut of the hypercube). As this graph is bipartite, its spectrum is symmetric, and it therefore suffices (by Cauchy interlacing) to obtain a lower bound on the spectral radius of any vertex induced subgraph of the sensitivity graph \cite{beniamini2020approximate}.

For $\BMnk$ this construction is straightforward -- consider the induced graph whose left vertices are all $(k-1)$-matchings, and right vertices are all $k$-matchings. This produces a biregular subgraph of the sensitivity graphs of $\BMnk$, with left degrees $(n-k+1)^2$ and right degrees $k$. As it is well known that the spectral radius of a biregular graph is $\sqrt{d_L d_R}$ (where $d_L$ and $d_R$ are the left and right degrees, respectively), this concludes the bound on the spectral sensitivity of $\BMnk$, and by extension, its approximate degree\footnote{We remark that the same construction also trivially extends to $\UBMnk$; the \textit{unique} $k$-matching function.}. This lower bound on $\adeg{\BMnk}$ now implies the $\ell_1$-norm lower bound, through Lemma \ref{lemma:low_l1_apx_deg}.

As for the log-rank lower bound, it follows by a simple fooling set argument, under the same input partition used in Theorem \ref{thm:rank_ubpm}. Let $L = A \sqcup B$ be the left vertices corresponding to the input partition, where $|A| = |B| = \sfrac{n}{2}$, and let $A'$ and $B'$ be the first $\sfrac{k}{2}$ vertices of $A$ and $B$, respectively. Let $C$ be the first $k = \alpha n$ right vertices. Then, under the notation of Theorem~\ref{thm:rank_ubpm},
$$ S = \Big\{ \left(\id(A',S), \id(B',\bar{S})\right)\ :\ S \subseteq C,\ \bar{S} = C \setminus S,\ |S|=|\bar{S}|=\tfrac{k}2 \Big\} $$
is a fooling set for $\BM_{n, \alpha n}$, where the indices of $S$ and $\bar{S}$ correspond to a \textit{fixed} ordering on $C$. Any pair $(x,y)$ contains a $k$-matching, but for any mismatching pair belonging to sets $S_1 \ne S_2 \subseteq C$, we have that $S_1 \cap S_2 \ne \emptyset$ and thus the maximum matching is of size $|S_1 \cup S_2| < k$. By construction, this fooling set is of size
$$ \log_2 |S| = \log_2 {k \choose {\sfrac{k}{2}}} = k - o(1) $$
and the log-rank bound now follows from Theorem \ref{thm:fooling_set_bounds_log_rank}.\footnote{For the \textit{unique} bipartite $k$-matching function $\UBMnk$ one can obtain a slightly stronger log-rank bound by repeating the construction of Theorem \ref{thm:rank_ubpm} with $k$-matchings rather than perfect matchings, and by adding $n-k$ isolated vertices. This yields a log-rank bound of $\log_2 \left(\sfrac{k}{2} !\right) = \Theta(k \log k)$.}

\paragraph*{Formulas over Low-Norm Functions} Given any two nontrivial Boolean functions $f$ and $g$, the norms of their conjunction, disjunction, and negation are at-most multiplicative in their respective norms, and the same holds for their duals. Therefore, the dual of any short De Morgan formula whose atoms are Boolean functions of low dual $\ell_1$-norm, will similarly inherit the low-norm property. Several matching functions can be represented in this way, and thus have low dual norm. For example
$$ \MaxMatchnk(x_{1,1}, \dots, x_{n,n}) = \begin{cases} 1 & \text{The maximum matching of } \big\{ (i,j) : x_{i,j} = 1\big\} \text{ is of size $k$} \\ 0 & \text{otherwise} \end{cases} $$
can be constructed as $\MaxMatchnk = \BMnk \land \lnot \BM_{n, k+1}$.

\end{document}